\newtheorem{lem}{Lemma}
\newtheorem{rem}{Remark}
\newtheorem{thm}{Theorem}
\newtheorem{prop}{Proposition}
\theoremstyle{definition}
\newtheorem{defn}{Definition}
\definecolor{orange}{RGB}{255,107,0}
\definecolor{green}{RGB}{0,160,20}
\begin{document}

\title{Energy-Efficient Resource Allocation for NOMA enabled MEC Networks with Imperfect CSI}
\author{Fang Fang,~\IEEEmembership{Member,~IEEE},
Kaidi Wang,~\IEEEmembership{Student Member,~IEEE},\\ Zhiguo Ding,~\IEEEmembership{Fellow,~IEEE}, and Victor C.M. Leung,~\IEEEmembership{Fellow,~IEEE}
\thanks{Fang Fang is with the Department of Engineering, Durham University, Durham DH1 3LE, U.K.(e-mail:
fang.fang@durham.ac.uk).}
\thanks{Kaidi Wang and Zhiguo Ding are with the Department of Electrical and Electronic Engineering, The University of Manchester, M13 9PL, UK (e-mail: kaidi.wang@postgrad.manchester.ac.uk, zhiguo.ding@manchester.ac.uk).}
\thanks{V. C. M. Leung is with the College of Computer Science and Software Engineering, Shenzhen University, Shenzhen 518060, China, and also with the Department of Electrical and Computer Engineering, the University of British Columbia, Vancouver, BC V6T 1Z4, Canada (e-mail: vleung@ieee.org).}
}\maketitle
\begin{abstract}
		The combination of non-orthogonal multiple access (NOMA) and mobile edge computing (MEC) can significantly improve the spectrum efficiency beyond the fifth-generation network. In this paper, we mainly focus on energy-efficient resource allocation for a multi-user, multi-BS NOMA assisted MEC network with imperfect channel state information (CSI), in which each user can upload its tasks to multiple base stations (BSs) for remote executions. To minimize the energy consumption, we consider jointly optimizing the task assignment, power allocation and user association. As the main contribution, with imperfect CSI, the optimal closed-form expressions of task assignment and power allocation are analytically derived for the two-BS case. Specifically, the original formulated problem is nonconvex. We first transform the probabilistic problem into a non-probabilistic one. Subsequently, a bilevel programming method is proposed to derive the optimal solution. In addition, by incorporating the matching algorithm with the optimal task and power allocation, we propose a low complexity algorithm to efficiently optimize user association for the multi-user and multi-BS case. Simulations demonstrate that the proposed algorithm can yield much better performance than the conventional OMA scheme but also the identical results with lower complexity from the exhaustive search with the small number of BSs.
\end{abstract}

\section{Introduction}
In the last decade, increasing applications and services such as virtual reality (VR), augmented reality (AR), autonomous vehicle and wireless healthcare in Internet of things (IoT) have emerged in the evolution of wireless communication networks. However, most devices (e.g., sensors and wearable devices) have limited communication and storage resources and finite processing capabilities, which cannot support utral-low-latency and high-reliable communications. As a result, multi-access edge computing (MEC) has been proposed as a promising solution to enhance the computing capability of mobile devices with computation-intensive and latency-critical tasks \cite{ACTutMEC2017,TTMEC2017}. The performance gains on latency and energy consumption reduction motivated researchers to seamlessly apply MEC into wireless communications \cite{YMaoMECSureveys2017,PMMEC2017}. 
 To further improve spectrum efficiency, both non-orthogonal multiple access (NOMA) uplink transmission and NOMA downlink transmission have been proposed to apply in MEC \cite{ZDing2018TCOM}. NOMA-MEC can achieve superior performance on latency and energy consumption reduction over the traditional OMA-MEC system. 
 This motivated researches to investigate the performance gain of NOMA MEC networks \cite{ZNingNOMAMEC2019TVT,MVaeziNOMA2019}. 
\subsection{Related Literature}
In MEC networks, the devices with computation-intensive tasks need to offload (download) partial/entire tasks (task results) to (from) the MEC server located in close proximity \cite{SBMECMag2014,PMMECMag2017}. Depending on the partitionability and dependence of tasks, the offloading models can be classified into \emph{binary offloading} and \emph{partial offloading}. In the \emph{binary offloading} model, the task cannot be partitioned and needs to be offloaded as an entire task to the MEC server \cite{JRMECTWC2018}. While in the \emph{partial offloading} model, the task can be partitioned into multiple tasks, and parts of them can be offloaded to the MEC server for remote executions, then the remaining tasks can be executed locally at mobile devices \cite{YWangMEC2016}. 
The offloaded computation tasks can be executed at the MEC server, usually the base station (BS). Then computation results can be downloaded from the BS to users \cite{SAMECSurvey2014,WShiIEEEIoT2016}.

Communication and computation resource optimization plays a significant role in improving the system performance of NOMA MEC networks, which attracts extensive researchers to conduct research works on NOMA MEC. There are two categories according to different objectives: 1. Task delay minimization \cite{ZDing2018WCLDely,YWuTVTMECNOMA2019,FangTWC2019,LQ2019Iot2019}; 2. Energy consumption minimization \cite{ZDing2018TVT,FWang2018TCOM,AKiani2018JIOT,YPanCL2018,SHanIoT2019,ZYEEGC2018,ZSongCL2018,QGEEDL2018GC,FangGCWS2019}; Regarding to a single input and single output (SISO) NOMA MEC system, the authors in \cite{ZDing2018WCLDely,ZDing2018TVT} proposed a hybrid NOMA transmission scheme to minimize delay and energy consumption by considering fully offloading tasks to the MEC server. In particular, the optimal expressions of offloading power and time allocation were derived for the hybrid NOMA system, where a user can first offload parts of its task by pure NOMA, then offload the remaining by OMA. Both \emph{partial offloading} \cite{YWuTVTMECNOMA2019} and \emph{binary offloading} \cite{LQ2019Iot2019,FWang2018TCOM} were considered to minimize the task delay in NOMA MEC networks. In \cite{YWuTVTMECNOMA2019}, the system overall delay was minimized by an efficient layered algorithm for the NOMA-enabled multi-access MEC network. Regarding to \emph{partial offloading}, the task completion time was minimized by the proposed bisection method based algorithm for multi-user NOMA enabled MEC networks \cite{FangTWC2019}, where the optimal task assignment and power allocation expressions were derived for the two-user case. Besides, \emph{binary offloading} was considered to minimize the maximum task execution latency by optimizing SIC ordering and computation resource for NOMA MEC networks \cite{LQ2019Iot2019}. 
Regarding the multiple antenna model, \emph{binary offloading} and \emph{partial offloading} were both studied in \cite{FWang2018TCOM}, where a Lagrangian-based algorithm and a greed method based algorithm were proposed to minimize the energy consumption. Furthermore, the future wireless network is expected to achieve massive connectivity, which requires each BS to serve a large number of mobile devices by providing remote executions in NOMA MEC networks \cite{AKiani2018JIOT,YPanCL2018}. The energy efficiency optimization problem was investigated in NOMA based MEC networks \cite{SHanIoT2019,ZYEEGC2018}. In addition, the energy consumption minimization problem was studied for heterogeneous NOMA based MEC networks \cite{ZSongCL2018}. From NOMA transmission perspective, there are two main applications of NOMA in MEC including NOMA uplink transmission and NOMA downlink transmission. NOMA uplink transmission indicates that multiple users transmit signals to one single MEC server by using NOMA principle, which can ensure that multiple users complete their offloading simultaneously. While NOMA downlink transmission indicates one user offloads its tasks to multiple MEC servers by using NOMA protocol. Most existing works focus on NOMA uplink transmission enabled MEC networks while only a handful research works investigated NOMA downlink transmission in MEC \cite{QGEEDL2018GC,XDiaoAccess2019D2D,FangGCWS2019}. However, the perfect channel state information (CSI) is difficult to obtain in practice. To fully exploit the benefit of NOMA downlink transmission in MEC, such as supporting multi-server, in this work, we mainly focus on the energy consumption in NOMA downlink transmission enabled MEC\footnote{To avoid the confusion of NOMA downlink transmission in MEC and the traditional downlink transmission, we use NOMA transmission instead of NOMA downlink transmission for the rest of this paper.}.


\subsection{Motivations and Contributions}
Since perfect CSI is challenging to obtain in practice, in this paper, we consider imperfect CSI and investigate the resource allocation including task assignment and offloading power allocation for a multi-user and multi-BS NOMA-MEC network. It is worth to mention that the proposed resource allocation scheme is centralized. Similar to the emerging cloud radio access network (C-RAN), all BSs are connected by one central unit. The global system information including the estimated CSI, the computation tasks communication resource at devices and computing resource at BSs can be obtained at the central unit via high-capacity fronthaul links such as fiber links \cite{TXTranTVT2018}. The resource optimization is implemented in a centralized manner. The centralized unit performs the proposed algorithm to make decisions for users and sends them to BSs, which will broadcast the decisions to all the associated users by one pilot sequence. Considering the imperfect CSI, we propose an energy-efficient resource allocation scheme for a multi-user multi-BS network. We first focus on a two-BS case and propose the closed-form solution to the energy minimization problem. Subsequently, we focus on the multi-BS scenario, and a low-complexity user association is proposed to group each user with two BSs. Thus the obtained closed-form solution can be applied in each group to minimize the energy consumption. The detail contributions are listed as follows:
\begin{itemize}
	\item In this paper, we consider a multi-BS NOMA-MEC network with imperfect CSI. We aim to minimize the energy consumption of the offloading and computation processes by optimizing the transmit power, target offloading data rate and task partition at users. To reduce the decoding complexity at the receivers in NOMA transmission, we first consider that each user can offload its tasks to two BSs. The energy consumption minimization problem is formulated as a probabilistic problem, which is nonconvex. To efficiently solve the problem, we first transform the probabilistic problem into a nonprobabilistic one. Specifically, the outage probability constraint is incorporated into the objective function by using non-central chi-square distribution approximations.
	\item The transformed problem is still nonconvex and challenging to solve. We propose an optimal solution, where a bilevel programming method is proposed to minimize the energy consumption of the offloading phase and computing phase at MEC servers. Specifically, the closed-form expressions of the transmit power allocation and task partition are derived by carefully studying and analyzing the monotonicity and convexity of the problem. The optimal solution is concluded in five cases, which significantly reduces the computation complexity of the proposed scheme.
	\item We obtain some significant insights from the derived optimal solution, which clearly demonstrates the relationships between the optimal offloading schemes, pure NOMA and OMA, in the multi-BS NOMA-MEC network with imperfect CSI. The energy consumption efficiency (ECE) of communicating and computation phases is proposed to present conditions of pure NOMA offloading and OMA offloading. These insights demonstrate practical applications of the proposed resource allocation scheme. 
	\item For a more practical scenario, the multi-BS case, we propose a low-complexity algorithm to deal with the user association. Specifically, by using the obtained closed-form solutions for the two-BS case, we design two-sided matching to group each user with two BSs in the multi-user and multi-BS network. The optimal power allocation and task partition can be applied in each group. The complexity of the matching-based user association algorithm is significantly reduced compared to the optimal solution obtained through exhaustive searching. It can be shown that for a small number of BSs (M = 3), the proposed algorithm will yield identical results from the exhaustive search.
	\end{itemize} 
\subsection{Organization}
The rest of the paper is organized as follows: In Section II, the system model with the imperfect channel model and problem formulation are introduced. The optimal solution with closed-form expressions is proposed in Section III. In Section IV, an efficient user association algorithm is introduced. Simulation results are presented in Section V, and Section VI concludes the paper.
	\begin{figure}[t]
	\centering
	\graphicspath{{./figures/}}
\includegraphics[width=0.85\linewidth]{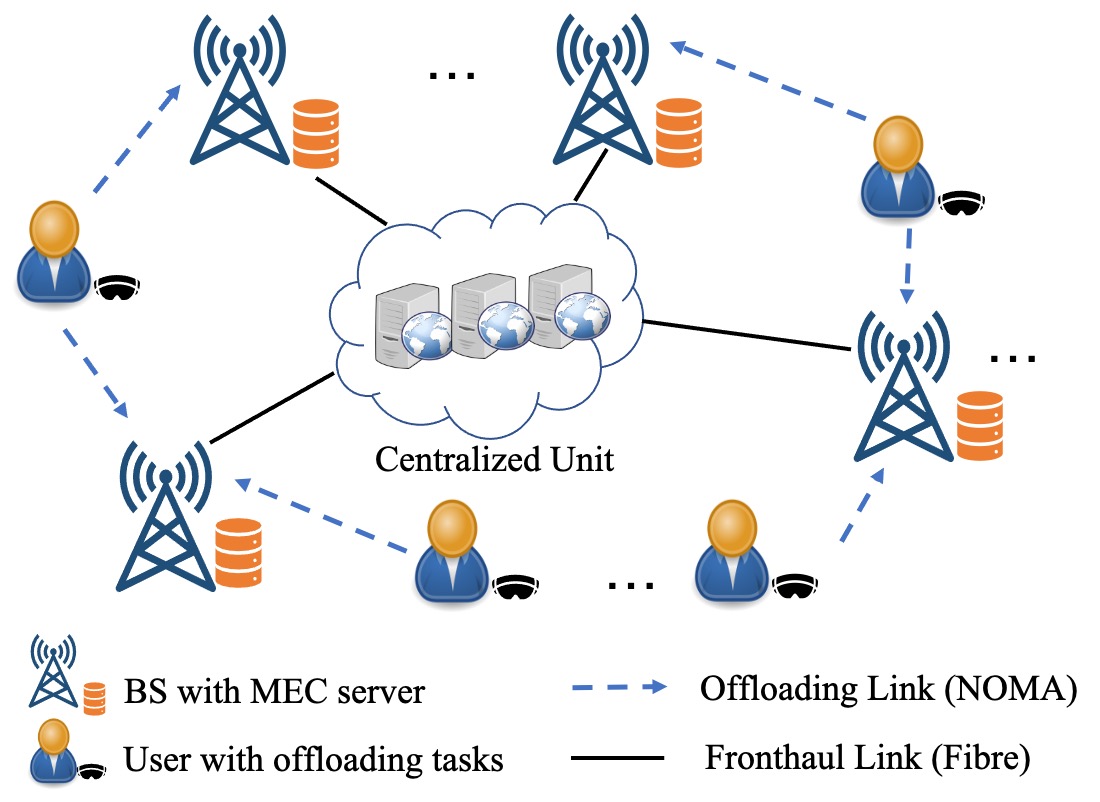}\\
	\caption{The multi-BS NOMA-MEC network.} \label{Fig0}
\end{figure}
\section{System Model and Problem Formulation}
\subsection{Multi-BS NOMA-MEC System Model} 
We consider a multi-BS NOMA-MEC network shown in Fig. \ref{Fig0}, where multiple users such as wearable devices and autonomous vehicles can offload their computation-intensive and latency-critical tasks to multiple BSs equipped with MEC servers. Each BS and each user are equipped with one antenna. To reduce energy consumption and task delay, each user requires to offload its tasks to multiple BSs in close proximity by NOMA transmission. This scenario is applicable in practice, especially for the cell edge users holding the computation-intensive task, but it has limited power to process the task by itself. The indices of $N$ users and $M$ BSs are respectively denoted by ${\rm UE}_m\in\{{\rm UE}_1,{\rm UE}_2,\cdots,{\rm UE}_M\}$ and ${\rm BS}_m\in\{{\rm BS}_1,{\rm BS}_2,\cdots,{\rm BS}_M\}$. In general, the task of each user can be described by $(L,C)$ where $L$ is the input number of bits for the task, and $C$ denotes the number of CPU cycles required to compute one bit of this task. In this system, we assume the task can be divided into several parts, and the user can offload different parts to different BSs for remote executions. The offloading task assignment ratio to ${\rm BS}_m$ is denoted by $\beta_{m}\in [0,1]$ and $\sum\limits_{m=1}^M\beta_{m}=1$. For example, there are two BSs. If the offloading bits to ${\rm BS}_1$ is $\beta_1L$, then the offloading bits of ${\rm BS}_2$ is $(1-\beta_1)L$.

We first focus the two-BS case and then extend it to the multi-BS case in Section IV. The channel gain from the user to ${\rm BS}_m$ is denoted by $g_m=h_m/d_m^{\frac{\alpha}{2}}$ where $h_m\sim \mathcal{CN}(0,1)$ is Rayleigh fading coefficients, and $d_m$ is the distance from the user to ${\rm BS}_m$, $\alpha$ is the path loss exponent, and where $\mathcal{CN}(0,1)$ is the complexed Gaussian distribution with mean zero and variance one. Without loss of generality, the channel gains of $M$ BSs are sorted as $|{g}_1|^2\leq|{g}_2|^2\leq\cdots\leq|{g}_M|^2$.
Assume that channel gains are constant within each transmission block, and vary from different blocks. The SIC technology is applied at BSs with a decoding order of decreasing order of the channel gains. Each BS can decode and remove the signals from BSs that have been decoded before. Denote the transmit power from the user to ${\rm BS}_m$ by $p_m$. Thus the signal received at ${\rm BS}_m$ is
	\begin{equation}
	\begin{aligned}
	y_m=|{g}_m|^2p_ms_m+\sum\limits_{i=m+1}^{M}|{g}_m|^2p_is_i+z
	\end{aligned}
	\end{equation}
	where $s_m$ is the transmit message to ${\rm BS}_m$, and $z\sim \mathcal{CN}(0,\sigma_{z}^2)$ is zero-mean additive white Gaussian noise (AWGN) with variance $\sigma_{z}^2$. The second term is the interference from other BSs. Define ${G}_m=\frac{|{g}_m|^2}{\sigma_{z}^2}$ as the channel gain ${G}_m$. Given by perfect CSI at BSs and the bandwidth $B$, the maximum achievable offloading data rate to ${\rm BS}_m$ can be written by
	\begin{equation}\label{C_m}
	\begin{aligned}
	C_m=B\log_2\left(1+\frac{{G}_mp_m}{\sum\limits_{i=m+1}^{M}{G}_mp_i+1}\right).
	\end{aligned}
	\end{equation}

\subsection{Imperfect CSI Channel Model}
Most previous works assumed that all the BSs know the entire knowledge of CSI. However, the perfect CSI is difficult to obtain in practice due to the high complexity of the back haul signalling overhead. In this paper, we investigate the energy consumption minimization by assuming that the small scale fading channel is estimated at BSs. The BSs forward the estimated CSI to the central unit via high-speed fronthaul links for the global decision making. In this section, the minimum mean square error (MMSE) channel estimation error model is adopted to describe the small scale fading coefficients $g_m$. Thus perfect channel gain can be written by 
\begin{equation}
g_m=\hat{g}_m+\epsilon
\end{equation} 
where $\hat{g}_m$ is the estimated channel gain including small scale fading $\hat{h}_m$ estimation and large scale fading $d^{\frac{\alpha}{2}}$, and $\epsilon\sim \mathcal{CN}(0,\sigma_{\epsilon}^2)$ is the channel estimation error with mean zero and variance $\sigma_{\epsilon}^2$. In this work, we assume that the large scale fading factors are perfectly estimated since the path loss and shadowing
are slowly varying. Thus we define the estimated channel gain from the user to ${\rm BS}_m$ normalized by $\sigma^2$ as $\hat{G}_m=\frac{\hat{h}_m}{\sigma^2}$. 

 Under imperfect CSI, a channel outage event happens when the instantaneous data rate with perfect CSI drops below the target rate. Define the target rate to ${\rm BS}_m$ as $R_m$. The actual channel gains $G_m$ are random variables since the estimate error $\epsilon$ is unknown. Given the target rate $R_m$, the outage probability can be defined as $\Pr\left[C_m<R_m|\hat{G}_m\right]$, which indicates 
 the communication from the user to ${\rm BS}_m$ fails when the instantaneous data rate $C_m$ drops below the target rate $R_m$ given by the estimated channel gain. To guarantee the quality of service (QoS) requirements, we usually limit the outage probability by $\varepsilon_o$. To corporate the outage probability into our system performance measurement, we adopt the average outage data rate \cite{NgTVT2010}
\begin{equation}
\begin{aligned}\label{Outage_R}
\hat{R}_m=R_m\Pr\left[C_m\geq R_m|\hat{G}_m\right] 
\end{aligned}
\end{equation}
where $\hat{R}_m$ indicates the minimum of the total average data rate successfully received by ${\rm BS}_m$. 

\subsection{Problem Formulation}
 In NOMA MEC, there are three phases to complete the task computation, i.e., task offloading, task computation at BSs and downloading task results from BSs to the user. In this work, we adopt the fully offloading scheme due to the limited power of the battery-powered devices \cite{ZDing2018TVT,ZDing2018WCLDely,ZDing2018TCOM}. The optimization is designed to save the processing energy at the MEC server and the energy consumption for transmitting tasks at users. The downloading transmission from the BSs to the user is not considered for the following two reasons: First, the size of task results are generally small \cite{CWangTVT2017,YHeTVT2018,YMaoJSAC2018}, and each BS has the more power to transmit the task result than that of the user. As a result, the downloading time is much shorter than offloading time. Second, the transmission energy consumption minimization at the BSs is more related to resource optimization from the perspective of BSs including transmit power allocation and subchannel allocation at BSs. However, in this work, we mainly focus on resource optimization from the perspective of users including offloading power, offloading data rate and task partition. Therefore, in this work, we investigate the energy consumption minimization including the transmitting energy consumption at users for transmitting signals to MEC servers and the computation energy consumption at the MEC servers in NOMA-MEC. 
\begin{itemize}
	\item \emph{Task offloading}: In this phase, the user offloads its partial task $\beta_m L$ to ${\rm BS}_m$. Given by the target data rate $R_m$, then the offloading time to ${\rm BS}_m$ is
	\begin{equation}
	T_m=\frac{\beta_mL}{R_m}.
	\end{equation}
	The energy consumption of offloading $\beta_m L$ task to ${\rm BS}_m$ is 
	\begin{equation}
	E_m=T_mp_m.
	\end{equation}
	\item \emph{Remote computation}: In this phase, the offloaded task will be computed by each BS's MEC server. Based on CPU frequency $f_m$ at ${\rm BS}_m$, the computing time at ${\rm BS}_m$ can be written by
	\begin{equation}
	T_m=\frac{\beta_{m}LC_m}{f_m}.
	\end{equation}  
	The computing energy consumption at ${\rm BS}_m$ is 
	\begin{equation}
	E_m^c=\kappa_m(1-\beta_m)L_mC_mf_m^2=\kappa_mT_mf_m^3
	\end{equation}
	where $\kappa_m$ denotes the effective capacitance coefficient for each CPU cycle of ${\rm BS}_m$.
\end{itemize}
According to the successive interference cancellation (SIC) technology applied at the receivers in NOMA protocol, the decoding complexity will exponentially increase with the number of receivers. To reduce the decoding complexity at BSs, we first consider the case that each user can only offload tasks to two BSs. For a multi-BS scenario, we propose a matching theory based algorithm for the user association shown in Algorithm 1 in Section IV. We aim to minimize the energy consumption of offloading and computing at MEC servers by optimizing the task assignment, target offloading rate and transmit power. The problem can be formulated as
\begin{subequations}\label{Prob:E_min1}
	\begin{align}
		\mathop {\min } \limits_{{\{\beta_1,\bm{R},\bm{p}\}}} \  &\frac{\beta_1 L}{R_1}p_1+\frac{(1-\beta_1) L}{R_2}p_2\\ \nonumber&+\kappa_{1}\beta_1LC_{1}f_{1}^{2}+\kappa_{2}(1-\beta_1)LC_{2}f_{2}^{2}\\
	\text{s.t.}   \quad &\Pr\left[C_m<R_m|\hat{G}_m\right]\leq \varepsilon_o,  m={1,2}, \label{Outage_con2}\\
	&0 \leq \beta_{1}\leq 1, \label{eq:beta2}\\ 
	&p_m\geq 0, m={1,2},\label{eq:p12}\\ 
	&p_1+p_2\leq P_{\max},\label{eq:p_sum2}\\ 
	&\frac{\beta_1 L}{R_1}\leq T_{\max},\label{eq:t1_range1}\\
	&\frac{(1-\beta_1) L}{R_2}\leq T_{\max}, \label{eq:t2_range1}\\
	&\frac{\beta_1 L}{R_1}=\frac{(1-\beta_1) L}{R_2}\label{eq:T_equal} \\ \nonumber
	\end{align}
\end{subequations}
where $\bm{R}=[R_1,R_2]^T$ and $\bm{p}=[p_1,p_2]^T$. Constraint \eqref{Outage_con2} limits the outage probability by $\varepsilon_o$;  constraint \eqref{eq:beta2} specifies the range of task assignment ratio; constraint \eqref{eq:p12} and constraint \eqref{eq:p_sum2} describe the rang of transmit power and the limitation of total transmit power; constraint \eqref{eq:t1_range1} and \eqref{eq:t2_range1} describe the delay limitations; constraint \eqref{eq:T_equal} guarantees that equal offloading time for all links (from the user to different BSs) via NOMA transmission. Note that problem \eqref{Prob:E_min1} is nonconvex and challenging to obtain the globally optimal solution in polynomial time.

  \section{Solution to Optimization Problem }
 Problem \eqref{Prob:E_min1} is nonconvex problem due to the outage constraint \eqref{Outage_con2}. In this section, we aim to incorporate the outage constraint \eqref{Outage_con2} into the objective function. In the following, we provide an effective way to transform the probabilistic problem to nonprobabilistic one.
 \subsection{Equivalent Data Rate with Imperfect CSI}
 By using the imperfect channel model, the actual instantaneous data rate can be rewritten as
 \begin{equation}
 	C_m=B\log_2\left(1+\frac{(\hat{G}_m+\epsilon)p_m}{\sum\limits_{i=m+1}^{M}(\hat{G}_m+\epsilon)p_i+1}\right).
 \end{equation}
 In general, the outage probability requirement is low (i.e., $\varepsilon_o \leq 0.1$). Thus the outage constraint can be satisfied with the equality at the optimal point \cite{NgTVT2010}. Therefore, we replace the ``$\leq$'' sign with a ``$=$'' sign in the following transformation \cite{SMImperfect2015}. The approximation is proved to be accurate \cite{Approxi}. As a result, the transformed optimization problem will be a more constrained version of problem \eqref{Prob:E_min1}. We introduce the following proposition to derive the target data rate:
 
 \begin{prop}\label{Out_RR}
 If we have the outage constraint:
 \begin{equation}\label{equalOut}
 	\Pr\left[C_m<R_m|\hat{G}_m\right]= \varepsilon_o,
 \end{equation}
 the target data rate can be derived as
 \begin{equation}
R_m=(1-\varepsilon_o)B\log_{2}\left(1+\frac{H_mp_m}{H_m\sum\limits_{i=m+1}^{M}p_i+\sigma_{z}^2}\right)
 \end{equation}
 where $H_m=-\ln(1-\varepsilon_o)2\left(1+\frac{\hat{g}_m}{\sigma_{\epsilon}^2}\right)/(\sigma_{z}^2d_m^{\alpha})$.	
 \end{prop}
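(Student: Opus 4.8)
The plan is to turn the probabilistic outage condition \eqref{equalOut} into a deterministic threshold on the channel magnitude, identify the conditional distribution of $|g_m|^2$ produced by the MMSE model, and then invert the resulting cumulative distribution function. First I would exploit the monotonicity of the rate in \eqref{C_m}: for fixed $p_m$ and $\{p_i\}_{i>m}$, the signal-to-interference-plus-noise ratio $\frac{|g_m|^2 p_m}{|g_m|^2\sum_{i>m}p_i+\sigma_z^2}$ is strictly increasing in $|g_m|^2$, so the outage event $\{C_m<R_m\}$ is exactly the half-line $\{|g_m|^2<\tau\}$. Setting $C_m=R_m$ and solving for the threshold gives $\tau=\frac{(2^{R_m/B}-1)\sigma_z^2}{p_m-(2^{R_m/B}-1)\sum_{i>m}p_i}$, which is well defined precisely when the target rate is feasible, i.e.\ the denominator is positive. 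This reduces \eqref{equalOut} to evaluating $\Pr[|g_m|^2<\tau\mid\hat g_m]=\varepsilon_o$.

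Next I would pin down the conditional law. Under the estimation model $g_m=\hat g_m+\epsilon$ with $\epsilon\sim\mathcal{CN}(0,\sigma_\epsilon^2)$, the MMSE property gives $g_m\mid\hat g_m\sim\mathcal{CN}(\hat g_m,\sigma_\epsilon^2)$, so $2|g_m|^2/\sigma_\epsilon^2$ is a noncentral chi-square random variable with two degrees of freedom and noncentrality $2|\hat g_m|^2/\sigma_\epsilon^2$; its conditional mean is $\sigma_\epsilon^2+|\hat g_m|^2$. The exact outage probability is therefore the corresponding noncentral chi-square CDF (a Marcum-$Q$ function) evaluated at $2\tau/\sigma_\epsilon^2$, which has no elementary inverse.

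To obtain a closed form I would invoke the noncentral chi-square approximation referenced as \cite{Approxi}, replacing the Marcum-$Q$ CDF by an exponential-type expression with the matched conditional mean. Setting this equal to $\varepsilon_o$ and inverting the exponential yields a quantile proportional to $-\ln(1-\varepsilon_o)$; this is exactly the step that produces both the logarithmic factor $-\ln(1-\varepsilon_o)$ and the effective channel gain $H_m$ (after the large-scale term $\hat g_m=\hat h_m/d_m^{\alpha/2}$ and the noise normalization are absorbed into the constant). Identifying $H_m$ as this $\varepsilon_o$-quantile and substituting $\tau=\tau(R_m)$, I would solve the threshold relation back for the SINR quantity to recover $\frac{H_m p_m}{H_m\sum_{i>m}p_i+\sigma_z^2}$, i.e.\ the reliably achievable target rate is $B\log_2\!\big(1+\frac{H_m p_m}{H_m\sum_{i>m}p_i+\sigma_z^2}\big)$. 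Finally, the leading $(1-\varepsilon_o)$ factor in the claimed expression comes from the average outage rate definition \eqref{Outage_R}, namely $\hat R_m=R_m\Pr[C_m\ge R_m\mid\hat g_m]=(1-\varepsilon_o)R_m$, applied to this target rate.

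The hard part will be the distributional step rather than the algebra: the exact conditional outage is a Marcum-$Q$ function, so the entire derivation hinges on the accuracy of the noncentral chi-square approximation and on carefully bookkeeping the constants ($\sigma_\epsilon^2$, the degrees-of-freedom factor, $d_m^\alpha$ via $\hat g_m$, and $\sigma_z^2$) as they collapse into the single effective gain $H_m$. A secondary but necessary check is the positivity of the threshold denominator $p_m-(2^{R_m/B}-1)\sum_{i>m}p_i$, without which the monotone inversion breaks down and the outage degenerates to one.
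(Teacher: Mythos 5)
Your proposal follows essentially the same route as the paper's proof: you reduce the outage event to a threshold on $|g_m|^2$ via monotonicity of the SINR, identify the conditional law as a noncentral chi-square with two degrees of freedom (whose exact CDF is a Marcum-$Q$ function), invoke the mean-matched central chi-square (exponential) approximation of \cite{Approxi} to invert in closed form, and collect the constants into $H_m$ with the $-\ln(1-\varepsilon_o)$ quantile, exactly as in Appendix~A. Your explicit bookkeeping of the feasibility condition (positivity of $p_m-(2^{R_m/B}-1)\sum_{i>m}p_i$) and your attribution of the $(1-\varepsilon_o)$ prefactor to the average outage rate definition are minor clarifications of points the paper leaves implicit, not a different method.
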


 \begin{proof}
 The derivative proof can be found in Appendix \ref{ProofProp1}
 \end{proof}

 
\subsection{Problem Transformation}
Based on the target rate obtained from Proposition \ref{Out_RR}, the minimum average offloading data rate incorporated with outage constraint is $\tilde{R}_m=(1-\varepsilon_o)R_m$. Therefore, problem \eqref{Prob:E_min1} can be transformed as 
\begin{subequations}\label{Prob:E_min}
	\begin{align}
	\mathop {\min } \limits_{{\{\beta_1,\bm{p}\}}}\   &\frac{\beta_1 Lp_1}{\tilde{R}_1}+\frac{(1-\beta_1) Lp_2}{\tilde{R}_2}\\ \nonumber&+\kappa_{1}\beta_1LC_{1}f_{1}^{2}+\kappa_{2}(1-\beta_1)LC_{2}f_{2}^{2}\\
	\text{s.t.}   \quad 
	&0 \leq \beta_1 \leq 1 ,\label{eq:P>0}\\
&p_1 \geq 0, \quad p_2\geq 0,\label{eq:P0}\\
&p_1+p_2\leq P_{\max} \label{eq:Pm}.\\
&\frac{\beta_1 L}{\tilde{R}_1}\leq T_{\max},\label{eq:p1_range}\\
&\frac{(1-\beta_1) L}{\tilde{R}_2}\leq T_{\max}, \label{eq:p2_range}\\
&\frac{\beta_1 L}{\tilde{R}_1}=\frac{(1-\beta_1) L}{\tilde{R}_2}.\label{T_1=T_2}\\ \nonumber	
	\end{align}
\end{subequations}
This problem is nonprobabilistic problem, in which the outage constraint is incorporated into the target rate. However, this problem is still nonconvex. In order to obtain the globally optimal solution, we transform problem \eqref{Prob:E_min} into a programming problem \cite{YXuTSP2017}:
\begin{subequations}\label{Prob:Bilevel}
	\begin{align}
	\mathop {\min } \limits_{0\leq\beta_1\leq 1}\ g(\beta_1) \triangleq \mathop {\min } \limits_{{\{p_1,p_2\}}}\ &\frac{\beta_1 Lp_1}{\tilde{R}_1}+\frac{(1-\beta_1) Lp_2}{\tilde{R}_2}\\\nonumber &+\kappa_{1}\beta_1LC_{1}f_{1}^{2}+\kappa_{2}(1-\beta_1)LC_{2}f_{2}^{2}\\ 
	\text{s.t.} \quad &\eqref{eq:P0}-\eqref{T_1=T_2} \\
 \nonumber	
	\end{align}
\end{subequations}
where $g(\beta_1)$ is the inner problem. Given by fixed $\beta_1$, the inner problem $g(\beta_1)$ is the energy minimization problem with respect to $p_1$ and $p_2$, which is challenging to solve due to its nonconvexity. To solve this problem, we first analyze its monotonicity and obtain the following proposition.

\begin{prop} \label{monotonic}
	The energy consumption function is monotonic increasing with $p_1$ and $p_2$. Therefore, the minimum energy consumption is only achieved when the offloading power equals the minimum value:
\begin{subequations}\label{OptimalPower}
	\begin{align}
	&p_1^*=\left(2^{A\beta_1}-1\right)\left(\frac{1}{H_2}\left(2^{A\left(1-\beta_1\right)}-1\right)+\frac{1}{H_1}\right) \label{p11optimal}\\
      & p_2^*=\frac{2^{A(1-\beta_1)}-1}{H_{2}} \label{p12optimal}.
       \end{align}
    \end{subequations}
    where $A =\frac{L}{(1-\varepsilon_o)BT_{\max}}$.
\end{prop}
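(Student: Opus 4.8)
The plan is to begin by discarding everything in the inner objective $g(\beta_1)$ that does not depend on the decision variables. For a fixed task split $\beta_1$, the two computation-energy terms $\kappa_1\beta_1 L C_1 f_1^2+\kappa_2(1-\beta_1)L C_2 f_2^2$ are constants, so minimizing $g(\beta_1)$ over $(p_1,p_2)$ is equivalent to minimizing only the offloading energy $E_{\mathrm{off}}=\tfrac{\beta_1 L p_1}{\tilde R_1}+\tfrac{(1-\beta_1)L p_2}{\tilde R_2}$, where by Proposition~\ref{Out_RR} the rates are the explicit functions $\tilde R_1=\tilde R_1(p_1,p_2)$ and $\tilde R_2=\tilde R_2(p_2)$ of the powers, with $\tilde R_1$ carrying the interference term in $p_2$ and $\tilde R_2$ being interference-free.

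Next I would establish the monotonicity claim. The key analytic ingredient is the elementary lemma that $x\mapsto x/\log_2(1+x)$ is strictly increasing on $x>0$, since the sign of its derivative equals that of $\ln(1+x)-\tfrac{x}{1+x}$, which is nonnegative. Applying this lemma term by term: the second summand $\tfrac{(1-\beta_1)Lp_2}{\tilde R_2}$ is increasing in $p_2$; the first summand $\tfrac{\beta_1 Lp_1}{\tilde R_1}$ is increasing in $p_1$ for fixed $p_2$ (the same lemma with the interference-modified coefficient) and is also increasing in $p_2$ for fixed $p_1$, because additional $p_2$ only raises the interference and lowers $\tilde R_1$. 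Hence every partial derivative of $E_{\mathrm{off}}$ is positive, which is precisely the asserted monotonicity in $p_1$ and $p_2$.

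I would then combine monotonicity with the constraints. Constraints \eqref{eq:p1_range}--\eqref{T_1=T_2} force the two offloading times to be equal and capped, $T_1(\bm p)=T_2(\bm p)=T\le T_{\max}$, so the feasible set (for fixed $\beta_1$) is a one-parameter curve that I would parametrize by this common time $T$. Inverting the rate expressions shows $p_2(T)$, and then $p_1(T)$, are both strictly decreasing in $T$; since $E_{\mathrm{off}}$ is componentwise increasing, its minimum on this monotone curve is attained at the largest admissible $T$, i.e. $T=T_{\max}$, making both delay constraints active. Setting $T_m=T_{\max}$ forces $\tilde R_2=(1-\beta_1)L/T_{\max}$ and $\tilde R_1=\beta_1 L/T_{\max}$; substituting these into the rate expressions of Proposition~\ref{Out_RR} and solving---first for $p_2$ from the interference-free $m=2$ equation, then for $p_1$ from the $m=1$ equation with $p_2=p_2^{\star}$ inserted into the interference term---produces exactly \eqref{p12optimal} and \eqref{p11optimal}, with $A=\tfrac{L}{(1-\varepsilon_o)BT_{\max}}$.

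The main obstacle I anticipate is not the partial-derivative computation, which is routine once the $x/\log_2(1+x)$ lemma is available, but correctly handling the coupling imposed by the equal-time constraint \eqref{T_1=T_2}: ``increasing in both powers'' does not by itself locate the minimizer, because $p_1$ and $p_2$ cannot be lowered independently while remaining feasible. The argument must verify that moving along the single admissible direction---toward larger $T$---decreases $p_1$ and $p_2$ simultaneously, so that ``minimum offloading power'' and ``$T=T_{\max}$'' genuinely coincide. I would also note that feasibility with respect to the sum-power budget $p_1^{\star}+p_2^{\star}\le P_{\max}$ in \eqref{eq:Pm} is assumed here and is treated in the subsequent case analysis, so that \eqref{OptimalPower} gives the minimizer whenever the delay constraints are the binding ones.
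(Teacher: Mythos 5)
Your proposal is correct, and its analytic core is the same as the paper's: the monotonicity of the energy in each power rests on the fact that $x/\log_2(1+x)$ is increasing, which is exactly the paper's inequality $x\ln x\ge x-1$ applied inside the partial derivatives $\partial E/\partial p_2$ and $\partial h/\partial p_1$, together with the observation that raising $p_2$ also inflates the interference seen on Link~1. Where you diverge is in how the minimizer is then located on the constraint set. The paper nests a second bilevel decomposition: it first minimizes over $p_2$ for fixed $p_1$, noting that the constraint $T_2\le T_{\max}$ gives a lower bound $\bigl(2^{A(1-\beta_1)}-1\bigr)/H_2$ on $p_2$ that is \emph{independent of} $p_1$, so monotonicity pins $p_2^*$ there; it then minimizes $h(p_1)=E(p_1,p_2^*)$ over $p_1$, where the equal-time constraint \eqref{T_1=T_2} (with $T_2=T_{\max}$ already forced) fixes $p_1^*$ as in \eqref{p11optimal}. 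You instead collapse the feasible set via \eqref{T_1=T_2} into a one-parameter curve $\bigl(p_1(T),p_2(T)\bigr)$ indexed by the common offloading time $T$, show both coordinates are strictly decreasing in $T$, and conclude by composition with the componentwise monotonicity that the optimum sits at $T=T_{\max}$. Both arguments are sound and yield \eqref{OptimalPower}. Your route has the merit of confronting explicitly the coupling issue that the paper glosses over -- componentwise monotonicity alone cannot locate a minimizer when $p_1$ and $p_2$ cannot be decreased independently -- and it treats the two delay constraints symmetrically; the paper's route has the merit of exposing that the optimal $p_2$ does not depend on $p_1$ (which is what makes the sequential minimization legitimate) and of reusing the same bilevel template later applied to $\beta_1$. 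Your closing remark that the sum-power budget \eqref{eq:Pm} is deferred to the feasibility/case analysis of the outer $\beta_1$ problem matches how the paper handles it (through the bound $\hat{\beta}_{1,\max}$ in Lemma~\ref{feasible}).
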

\begin{proof}
The proof can be found in Appendix \ref{PowerDerivation}
\end{proof}


\subsection{The Optimal Task Assignment Ratio Derivation}
\begin{figure*}
	\centering
	\graphicspath{{./figures/}}
	\subfigure[Strictly decreasing within feasible region.]{\begin{minipage}[b]{0.32\textwidth} 
\includegraphics[width=1\textwidth]{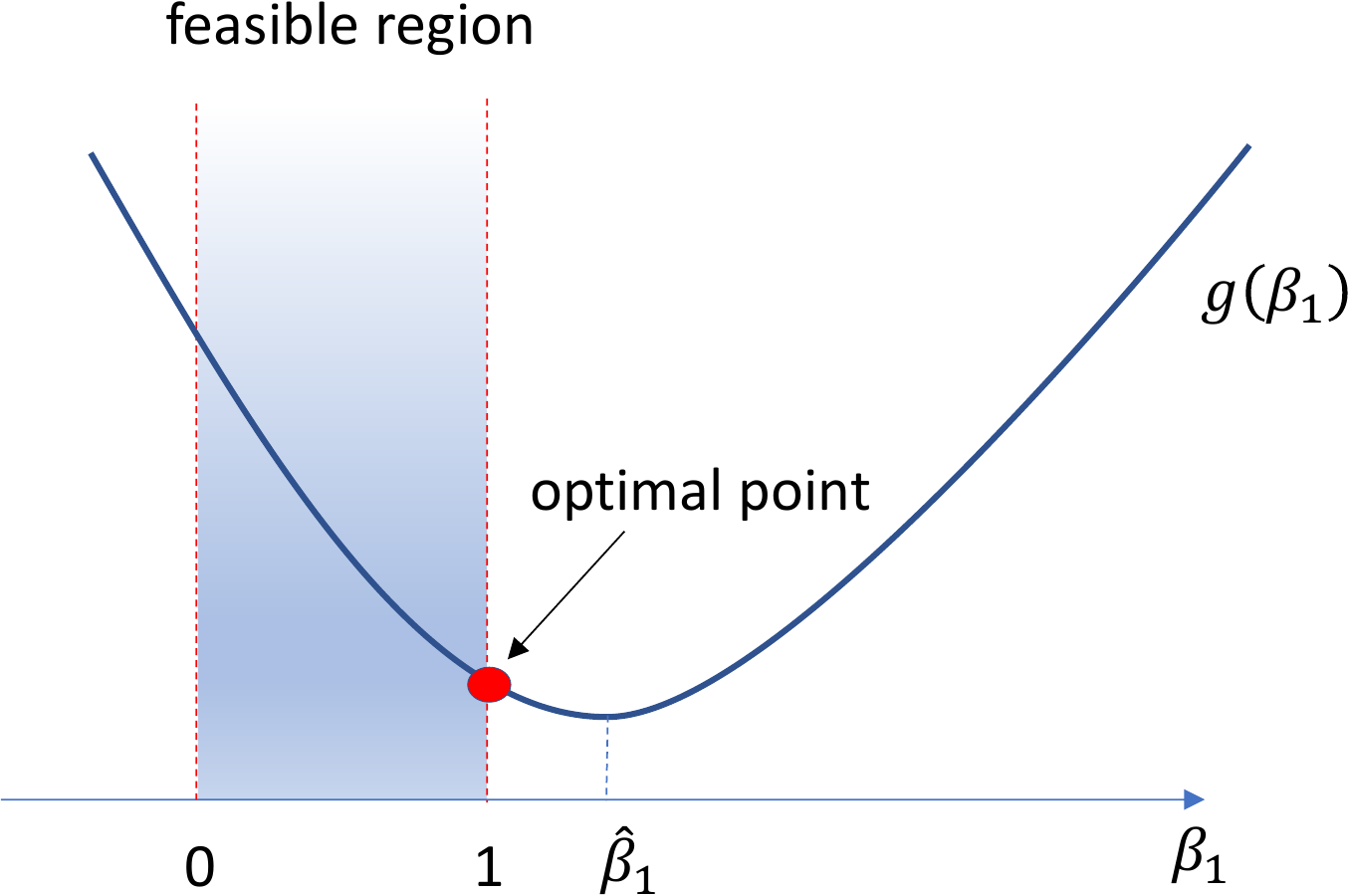} 
\end{minipage}}
\vspace{1em}
	\subfigure[First decreasing then increasing within feasible region.]{\begin{minipage}[b]{0.32\textwidth} 
	\graphicspath{{./figures/}}
\includegraphics[width=1\textwidth]{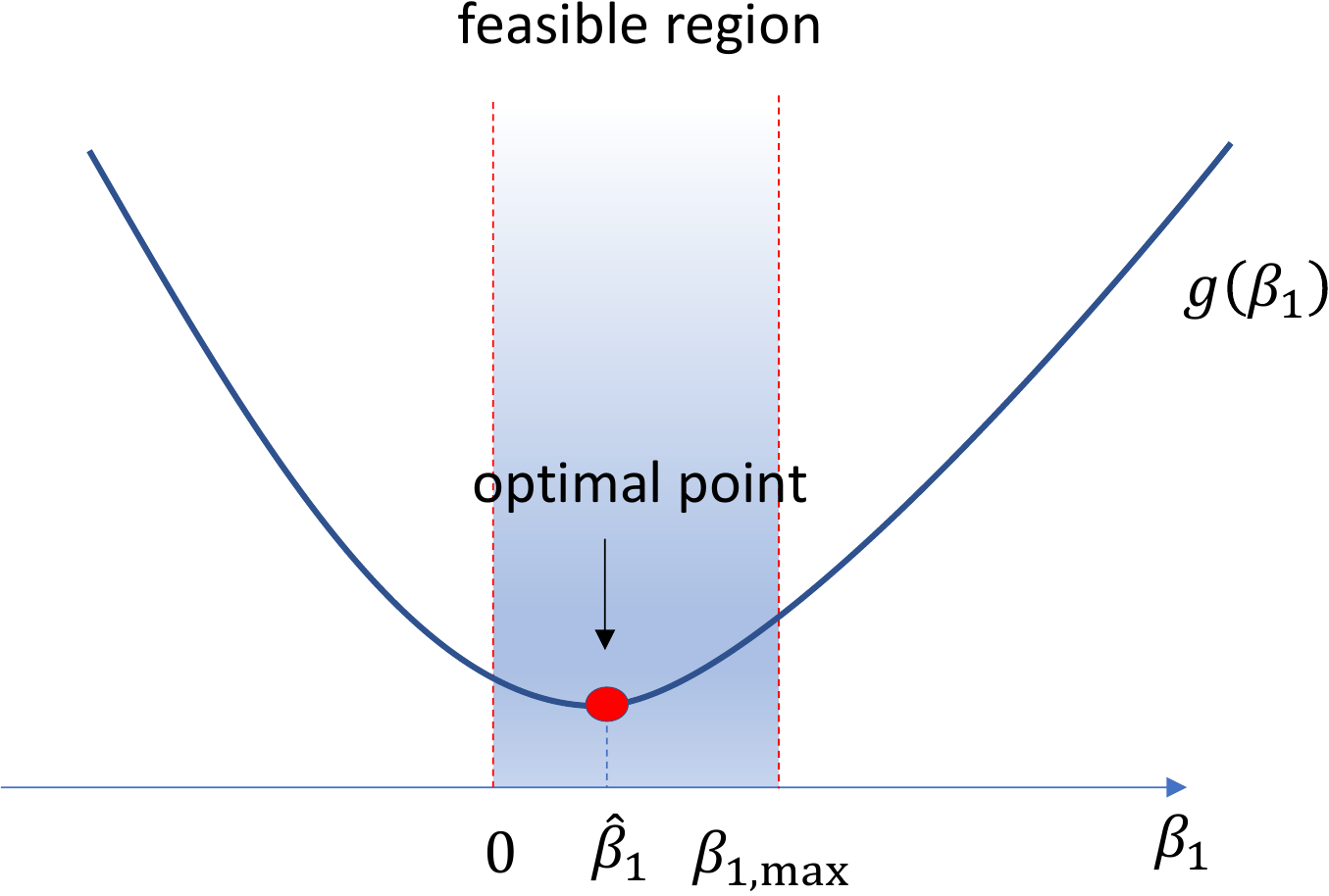} 
\end{minipage}}
	\subfigure[Strictly increasing within feasible region.]{\begin{minipage}[b]{0.32\textwidth} 
	\graphicspath{{./figures/}}
\includegraphics[width=1\textwidth]{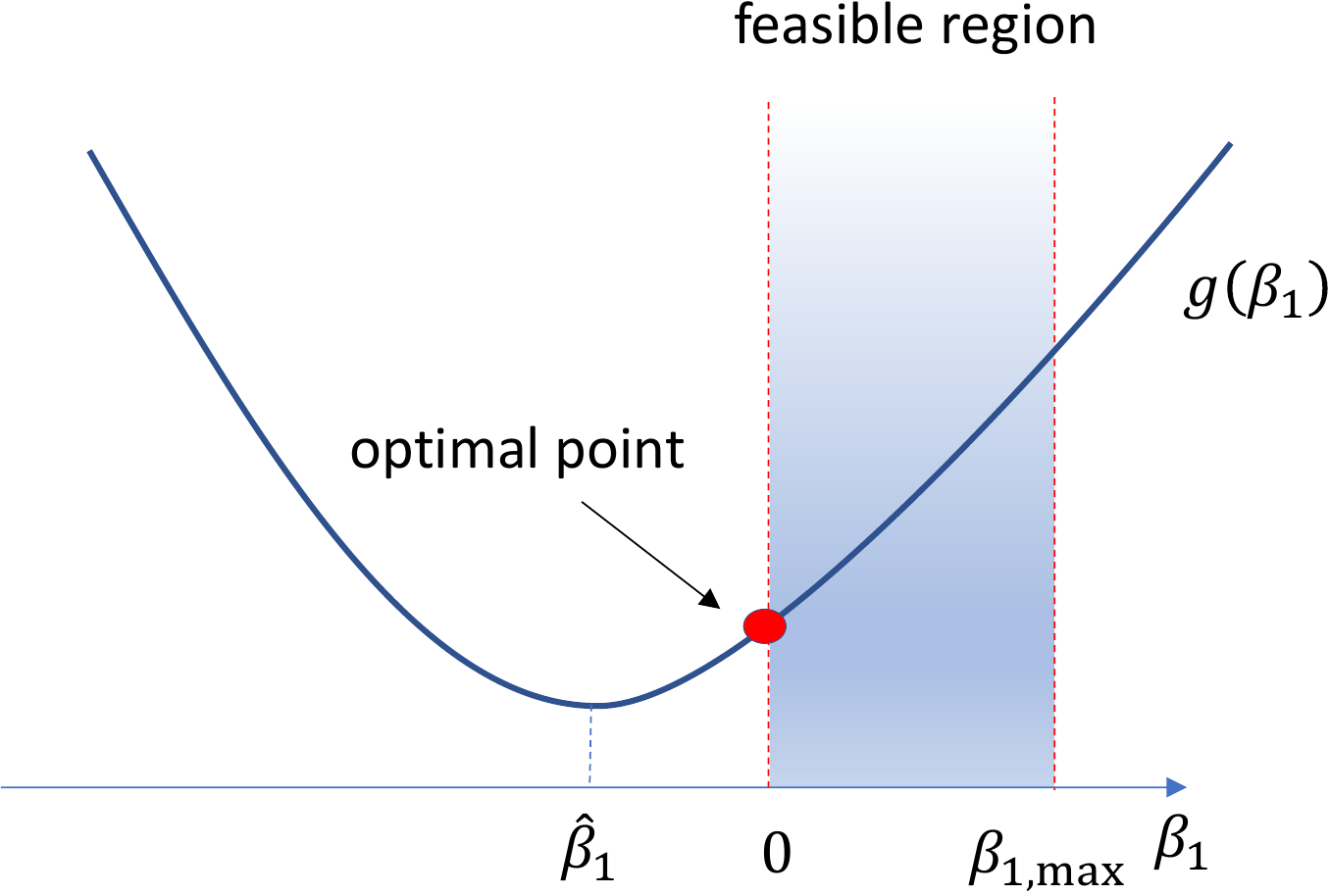} 
\end{minipage}}
\caption{Optimal task assignment ratio.}\label{Fig11}
	\end{figure*}
\begin{lem}\label{feasible}
In the following, we will derive the optimal task assignment ratio. Based on the optimal offloading power obtained from Proposition \ref{monotonic}, we have
\begin{equation}
\begin{aligned}
g\left(\beta_{1}\right)
=&T_{\textrm{max}}\left(\frac{1}{H_{2}}2^{A}+2^{A\beta_{1}}\left(\frac{1}{H_{1}}-\frac{1}{H_{2}}\right)-\frac{1}{H_{1}}\right)\\&+\left(\kappa_{1}LC_{1}f_{1}^{2}-\kappa_{2}LC_{2}f_{2}^{2}\right)\beta_{1}+\kappa_{2}LC_{2}f_{2}^{2}
\end{aligned}
\end{equation}
Thus the task assignment ratio optimization problem (the outer problem in \eqref{Prob:Bilevel}) can be rewritten by
\begin{subequations}\label{Prob:beta_1}
	\begin{align}
\underset{\beta_{1}}{\min}\ \ &g\left(\beta_{1}\right)\\
s.t.\ \ &0\leq\beta_{1}\leq1,\\
&\beta_{1}\leq\frac{1}{A}\log_{2}\left(\frac{P_{\max}+\frac{1}{H_{1}}-\frac{1}{H_{2}}2^{A}}{\frac{1}{H_{1}}-\frac{1}{H_{2}}}\right).
	\end{align}
\end{subequations}
To solve this problem, we have the following observations.
\begin{prop}
$g(\beta_1)$ is convex function with respective to $\beta_1$.
\end{prop}
\begin{proof}
\begin{equation}
\frac{\partial^{2}g}{\partial\beta_{1}^{2}}=\left(A\ln(2)\right)^{2}\left(\frac{1}{H_{1}}-\frac{1}{H_{2}}\right)2^{A\beta_{1}}\geq0.
\end{equation}	
\end{proof}

	The optimal solution of Problem \eqref{Prob:beta_1} relies on the feasible region of $\beta_1 \in [\beta_{1,\min},\beta_{1,\max}]$, which is
	\begin{equation}
	0\leq\beta_{1}\leq\max\left\{\hat{\beta}_{1,\max},1\right\} 
	\end{equation} 
	where 
	$\hat{\beta}_{1,\max}=\frac{1}{A}\log_{2}\left(\frac{P_{\max}+\frac{1}{H_{1}}-\frac{1}{H_{2}}2^{A}}{\frac{1}{H_{1}}-\frac{1}{H_{2}}}\right)$.
	To make Problem \eqref{Prob:beta_1} feasible, we must have
	\begin{equation}\label{feasibleCon}
	2^A\leq 1+H_2P_{\max}.
	\end{equation}
	This indicates that the offloading time must be no larger than maximum time delay $T_{\max}$ if all the tasks are transmitted to ${\rm BS}_1$.
	\begin{proof}
		To guarantee the feasible set, we must have 
		\begin{equation}
			\frac{1}{A}\log_{2}\left(\frac{P_{\max}+\frac{1}{H_{1}}-\frac{1}{H_{2}}2^{A}}{\frac{1}{H_{1}}-\frac{1}{H_{2}}}\right)\geq 0.
		\end{equation}
		Thus we have $2^A\leq 1+H_2P_{\max}.$
	\end{proof}
\end{lem}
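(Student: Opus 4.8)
The plan is to reduce the whole statement to a single inequality. After combining the box constraint $0\le\beta_1\le 1$ with the power-budget constraint $\beta_1\le\hat\beta_{1,\max}$ (which is exactly what encodes $p_1+p_2\le P_{\max}$ once the optimal powers of Proposition~\ref{monotonic} are substituted), the admissible set for $\beta_1$ is $[0,\min\{1,\hat\beta_{1,\max}\}]$. This set is nonempty precisely when $\hat\beta_{1,\max}\ge 0$, so the entire feasibility question is equivalent to translating $\hat\beta_{1,\max}\ge 0$ into the claimed closed form $2^A\le 1+H_2P_{\max}$.

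First I would pin down two sign facts that govern every subsequent manipulation. Since $L,B,T_{\max}>0$ and $0<\varepsilon_o<1$, the constant $A=\frac{L}{(1-\varepsilon_o)BT_{\max}}$ is strictly positive, so dividing by $A$ preserves the direction of an inequality and $\log_2(\cdot)\ge 0\iff(\cdot)\ge 1$. Second, because the channel gains are ordered $|g_1|^2\le|g_2|^2$ and $H_m$ is increasing in the channel quality (it is proportional to $(1+\hat g_m/\sigma_\epsilon^2)/d_m^{\alpha}$, with the stronger/closer BS carrying a larger $\hat g_m$ and smaller $d_m$), we have $H_1\le H_2$ and hence $\frac{1}{H_1}-\frac{1}{H_2}\ge 0$. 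The positivity of this quantity is the pivot of the whole argument, since it is the common denominator appearing inside the logarithm.

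With these facts the derivation is a short chain. Using $A>0$ and monotonicity of $\log_2$, the condition $\hat\beta_{1,\max}\ge 0$ is equivalent to the argument of the logarithm being at least $1$, namely $\frac{P_{\max}+\frac{1}{H_1}-\frac{1}{H_2}2^A}{\frac{1}{H_1}-\frac{1}{H_2}}\ge 1$. Multiplying through by the \emph{positive} denominator $\frac{1}{H_1}-\frac{1}{H_2}$ and cancelling the common $\frac{1}{H_1}$ leaves $P_{\max}-\frac{1}{H_2}2^A\ge-\frac{1}{H_2}$, which upon multiplying by $H_2>0$ rearranges to $H_2P_{\max}\ge 2^A-1$, i.e.\ $2^A\le 1+H_2P_{\max}$. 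For the physical interpretation I would note that this is exactly the extreme of the feasible interval, where the whole task is concentrated on a single BS: the corresponding offloading power is $(2^A-1)/H_2$, and requiring it to stay within $P_{\max}$ is the same as requiring the single-BS offloading time to stay within $T_{\max}$.

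The only genuinely delicate point, and the one I would flag explicitly, is the sign of $\frac{1}{H_1}-\frac{1}{H_2}$: the cross-multiplication step rests entirely on its being nonnegative, and if the channel ordering were reversed the final inequality would reverse with it. I would therefore justify $H_1\le H_2$ as a direct consequence of the channel-gain ordering fixed in the system model before performing any algebra, after which the remaining steps are routine.
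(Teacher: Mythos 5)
Your proposal is correct and takes essentially the same route as the paper: feasibility of the $\beta_1$-interval reduces to $\hat{\beta}_{1,\max}\geq 0$, and unwinding the logarithm gives $2^A\leq 1+H_2P_{\max}$. The only difference is that you make explicit what the paper uses silently---$A>0$ and $\frac{1}{H_1}-\frac{1}{H_2}\geq 0$ (from the channel ordering $H_1\leq H_2$), on which the cross-multiplication step hinges---and your interpretation correctly ties the condition to offloading the entire task to ${\rm BS}_2$ with power $(2^A-1)/H_2\leq P_{\max}$.
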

To obtain the optimal solution, we have the following theorem for the analysis.
\begin{thm} \label{Theorem1}
	When Problem \eqref{Prob:beta_1} is feasible, due to  its convexity, the energy consumption $g(\beta_1)$
\begin{itemize}
\item [(a)] strictly decreases with $\beta_1$, when $\frac{\partial g}{\partial\beta_{1}}<0$ within the feasible region. 
\item[(b)] firstly strictly decreases and then strictly increasing within feasible region. $\frac{\partial g}{\partial\beta_{1}}<0$ when $\beta_1<\hat{\beta}_1$, and $\frac{\partial g}{\partial\beta_{1}}>0$ when $\beta_1>\hat{\beta}_1$ where
\begin{equation} \label{hat_beta}
\hat{\beta}_{1}=\frac{1}{A}\log_{2}\left(\frac{\kappa_{2}LC_{2}f_{2}^{2}-\kappa_{1}LC_{1}f_{1}^{2}}{A\ln(2)\left(\frac{1}{H_{1}}-\frac{1}{H_{2}}\right)}\right),
\end{equation}
which is achieved when $\frac{\partial g}{\partial\beta_{1}}=0$.
\item [(c)] strictly increases with $\beta_1$ within the feasible region.
\end{itemize}
\end{thm}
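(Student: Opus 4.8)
The plan is to read off the three cases directly from the sign of the first derivative of $g$, using the convexity just established. First I would compute
\begin{equation}
\frac{\partial g}{\partial\beta_{1}}=T_{\max}A\ln(2)\left(\frac{1}{H_{1}}-\frac{1}{H_{2}}\right)2^{A\beta_{1}}+\left(\kappa_{1}LC_{1}f_{1}^{2}-\kappa_{2}LC_{2}f_{2}^{2}\right),
\end{equation}
which has the form $a\,2^{A\beta_{1}}+b$ with $a=T_{\max}A\ln(2)\big(\tfrac{1}{H_{1}}-\tfrac{1}{H_{2}}\big)$ and $b=\kappa_{1}LC_{1}f_{1}^{2}-\kappa_{2}LC_{2}f_{2}^{2}$. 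The channel ordering $|g_{1}|^{2}\leq|g_{2}|^{2}$ forces $H_{1}\leq H_{2}$, i.e. $\tfrac{1}{H_{1}}\geq\tfrac{1}{H_{2}}$ and hence $a\geq 0$; this is the same sign condition underlying the nonnegativity of $\partial^{2}g/\partial\beta_{1}^{2}$ recorded in the preceding proposition, so $\partial g/\partial\beta_{1}$ is (strictly, when $a>0$) increasing across the feasible interval.

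Next I would exploit the single structural fact that does all the work: a strictly increasing continuous function changes sign at most once, and only from negative to positive. Evaluating the derivative at the two endpoints $\beta_{1}=0$ and $\beta_{1}=\beta_{1,\max}$ of the feasible region from Lemma~\ref{feasible} then exhausts the possibilities. If $\partial g/\partial\beta_{1}<0$ at the right endpoint (hence everywhere, by monotonicity of the derivative), $g$ is strictly decreasing throughout, giving case (a). If $\partial g/\partial\beta_{1}\geq 0$ already at $\beta_{1}=0$, the derivative stays nonnegative and $g$ is strictly increasing, giving case (c). The remaining possibility is $\partial g/\partial\beta_{1}\big|_{\beta_{1}=0}<0<\partial g/\partial\beta_{1}\big|_{\beta_{1}=\beta_{1,\max}}$; by the intermediate value theorem together with monotonicity of the derivative there is then a \emph{unique} interior zero $\hat\beta_{1}$, with $g'<0$ for $\beta_{1}<\hat\beta_{1}$ and $g'>0$ for $\beta_{1}>\hat\beta_{1}$, which is exactly case (b).

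Finally, to obtain the closed form I would solve $\partial g/\partial\beta_{1}=0$, i.e. $a\,2^{A\hat\beta_{1}}+b=0$; isolating the exponential and taking $\log_{2}$ gives $\hat\beta_{1}=\frac{1}{A}\log_{2}\!\big(-b/a\big)$ with $-b/a=\big(\kappa_{2}LC_{2}f_{2}^{2}-\kappa_{1}LC_{1}f_{1}^{2}\big)/\big[A\ln(2)\big(\tfrac{1}{H_{1}}-\tfrac{1}{H_{2}}\big)\big]$, matching \eqref{hat_beta} up to the overall $T_{\max}$ factor carried by the derivative. I do not expect a genuine obstacle, since the claim is just the standard trichotomy for a convex function on an interval (monotone decreasing, monotone increasing, or unimodal with an interior minimizer). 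The only points requiring care are (i) confirming $a>0$ \emph{strictly}, which needs $H_{1}<H_{2}$ and is what upgrades the monotonicity in each case from weak to strict, and (ii) checking that in case (b) the computed $\hat\beta_{1}$ indeed lies in $[0,\beta_{1,\max}]$, which is guaranteed precisely by the two endpoint sign conditions together with the feasibility condition \eqref{feasibleCon}.
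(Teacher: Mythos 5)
Your proof is correct and takes essentially the same route as the paper: the paper likewise reads the trichotomy directly off the convexity of $g$ (equivalently, the monotonicity of $\partial g/\partial\beta_1$, whose sign is governed by $\frac{1}{H_1}-\frac{1}{H_2}\geq 0$) and defines $\hat\beta_1$ as the root of $\partial g/\partial\beta_{1}=0$; your endpoint-sign/intermediate-value elaboration just makes explicit what the paper asserts. Your side observation is also accurate — the paper drops the $T_{\max}$ factor both in its second-derivative expression and in \eqref{hat_beta} (it reappears later as $\bar{A}=T_{\max}A$ in the ECE discussion), a sign-irrelevant but genuine inconsistency that your derivation corrects.
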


Theorem \ref{Theorem1} demonstrates the convexity of the energy consumption on the task assignment and guarantees the uniqueness of the globally optimal energy consumption, which can be illustrated by Fig. \ref{Fig11}.
\begin{figure*}
	\centering
	\vspace{1em}
	\subfigure[Optima task assignment ratio $\beta_1^*=\beta_{1,\max}$ ($\beta_{1,\max}$=1 or $\hat{\beta}_{1,\max}$).]{\begin{minipage}[b]{0.32\textwidth} 
	\graphicspath{{./figures/}}		\includegraphics[width=1\textwidth]{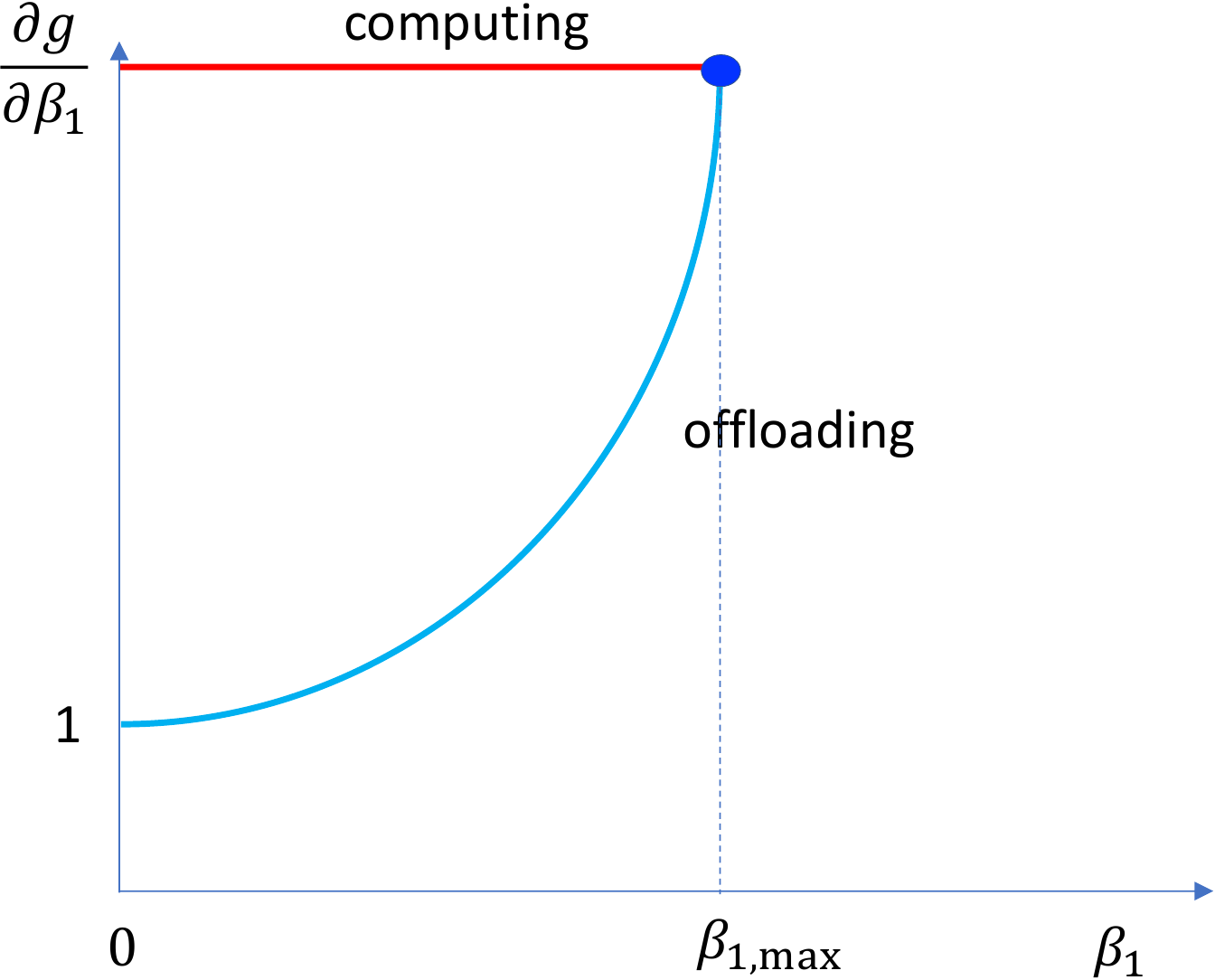} 
	\end{minipage}}
	\vspace{1em}
	\subfigure[Optima task assignment ratio $\beta_1^*=\hat{\beta}_1$.]{\begin{minipage}[b]{0.32\textwidth} 
		\graphicspath{{./figures/}}	\includegraphics[width=1\textwidth]{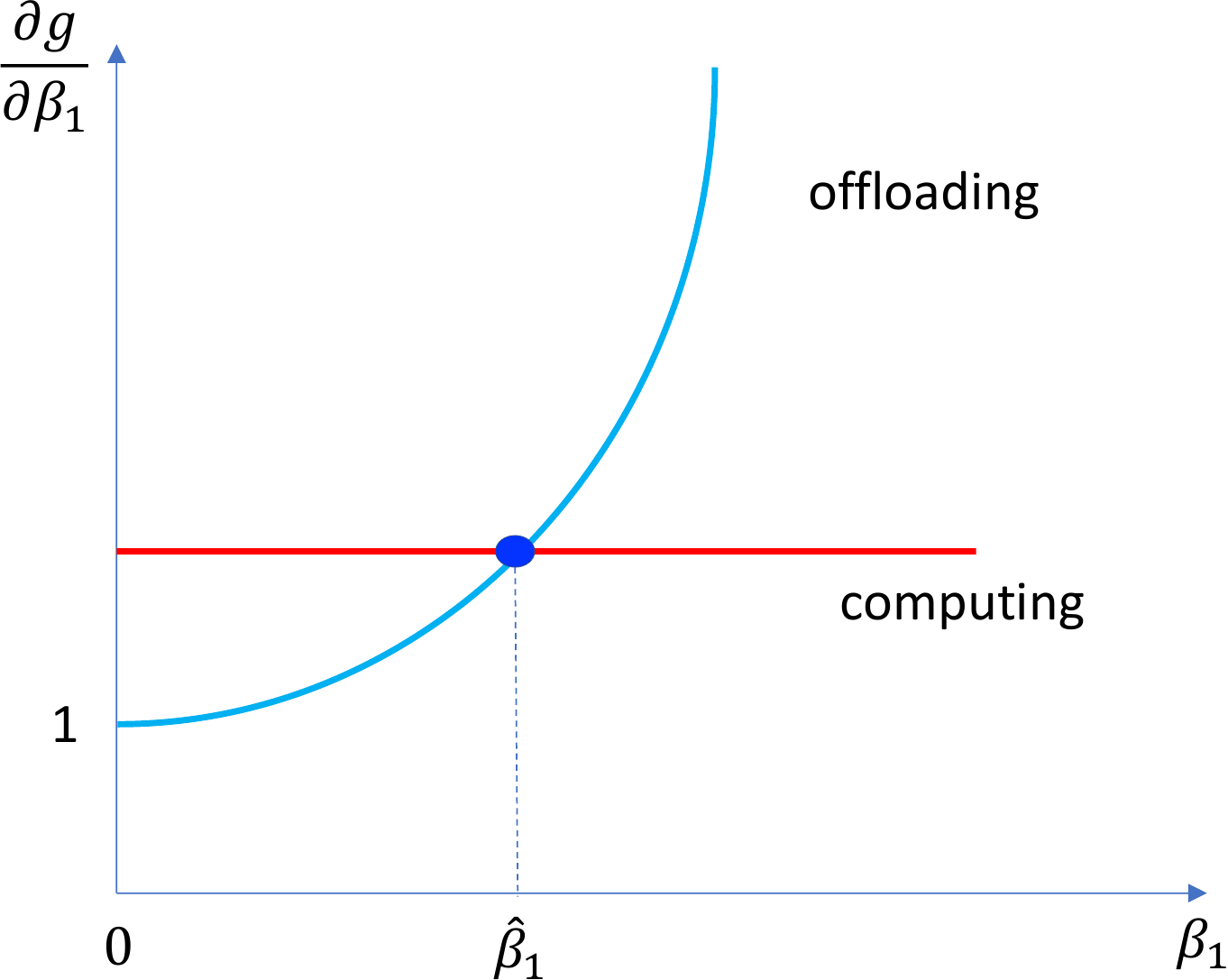} 
	\end{minipage}}
	\subfigure[Optima task assignment ratio $\beta_1^*=0$.]{\begin{minipage}[b]{0.32\textwidth} 
		\graphicspath{{./figures/}}	\includegraphics[width=1\textwidth]{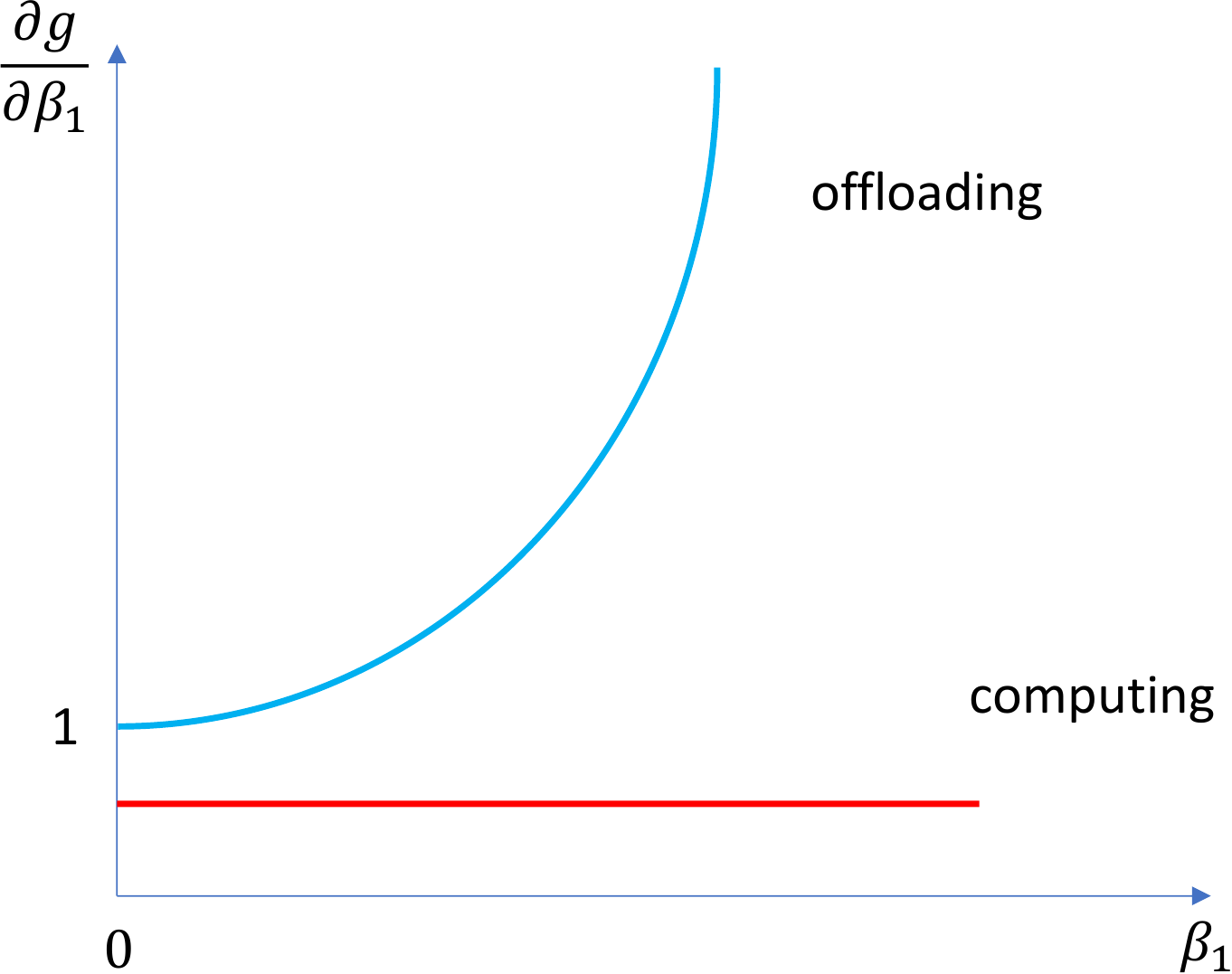} 
	\end{minipage}}
	\caption{Optimal task assignment ratio cases.}\label{Fig2}
\end{figure*}
In cases $(a)$ and $(b)$ of Theorem \ref{Theorem1}, the upper bound of $\beta_1$ can be 1 or $\hat{\beta}_{1,\max}$, then we have two optimal solutions for each case. In case $(c)$ of Theorem \ref{Theorem1}, there is only one optimal solution since the upper bound value will not affect the optimal solution. As a result, the optimal task assignment ratio can be concluded in the following five cases:  
\begin{itemize}
\item \emph{Case 1}: $\beta_1^*=1$, based on the conditions 
\begin{equation}\label{Case1Con}
\left\{
\begin{aligned}
&  2^A \leq 1+H_1P_{\max} \quad  & \\
&  2^A \leq \frac{\kappa_{2}LC_2f_2^2-\kappa_{1}LC_1f_1^2}{A\ln(2)\left(\frac{1}{H_1}-\frac{1}{H_2}\right)}. &
\end{aligned}
\right.
\end{equation}
In this case, the energy consumption is decreasing with $\beta_1$, which is illustrated by Fig. \ref{Fig11} (a). Thus the minimum energy consumption is achieved when $\beta_1^*=1$, which is OMA system, where the user only transmits its task to the ${\rm BS}_1$.

\item  \emph{Case 2}: $\beta_1^*=\hat{\beta}_{1,\max}$, based on the conditions 
\begin{equation}\label{Case2Con}
\left\{
\begin{aligned}
&  1+H_1P_{\max} \leq 2^A \leq 1+H_2P_{\max} \quad  & \\
& \frac{\kappa_{2}LC_2f_2^2-\kappa_{1}LC_1f_1^2}{A\ln(2)\left(\frac{1}{H_1}-\frac{1}{H_2}\right)} \geq \frac{P_{\max}+\frac{1}{H_{1}}-\frac{1}{H_{2}}2^{A}}{\frac{1}{H_{1}}-\frac{1}{H_{2}}}. &
\end{aligned}
\right.
\end{equation}
In this case, the energy consumption is decreasing with $\beta_1$. Since the upper bound of the feasible region is $\hat{\beta}_{1,\max}$, the minimum energy consumption can be achieved when $\beta_1^*=\hat{\beta}_{1,\max}$. This is pure NOMA offloading scheme, where the user offloads its tasks to two BSs simultaneously.
\item  \emph{Case 3}: $\beta_1^*=\hat{\beta}_1$, based on the conditions 
\begin{equation}\label{Case3Con}
\left\{
\begin{aligned}
&  2^A \leq 1+H_1P_{\max} \quad  & \\
&  1\leq \frac{\kappa_{2}LC_2f_2^2-\kappa_{1}LC_1f_1^2}{A\ln(2)\left(\frac{1}{H_1}-\frac{1}{H_2}\right)}\leq 2^A. &
\end{aligned}
\right.
\end{equation}
In this case, the energy consumption first decreases until $\beta_1$ reaches $\hat{\beta}_1$, and then increases with $\beta_1$ until $\beta_1$ reaches its maximum. Since the upper bound of the feasible region is $\hat{\beta}_{1,\max}$, the minimum energy consumption is achieved when $\beta_1^*=\hat{\beta}_{1,\max}$. This is pure NOMA offloading scheme.

\item  \emph{Case 4}: $\beta_1^*=\hat{\beta}_1$, based on the conditions 
\begin{equation}\label{Case4Con}
\left\{
\begin{aligned}
&  1+H_1P_{\max} \leq 2^A \leq 1+H_2P_{\max}   & \\
&  1\leq \frac{\kappa_{2}LC_2f_2^2-\kappa_{1}LC_1f_1^2}{A\ln(2)\left(\frac{1}{H_1}-\frac{1}{H_2}\right)}\leq \frac{P_{\max}+\frac{1}{H_{1}}-\frac{1}{H_{2}}2^{A}}{\frac{1}{H_{1}}-\frac{1}{H_{2}}}. &
\end{aligned}
\right.
\end{equation}
Similar with \emph{Case 4}, the upper bound of the feasible region is $1$. Thus the minimum energy consumption is achieved when $\beta_1^*=\hat{\beta}_{1,\max}$ with the above conditions. This is also pure NOMA offloading scheme.

\item  \emph{Case 5}: $\beta_1^*=0$, based on the conditions 
\begin{equation}\label{Case5Con}
\left\{
\begin{aligned}
&   2^A \leq 1+H_2P_{\max}   & \\
&  \frac{\kappa_{2}LC_2f_2^2-\kappa_{1}LC_1f_1^2}{A\ln(2)\left(\frac{1}{H_1}-\frac{1}{H_2}\right)}\leq 1. &
\end{aligned}
\right.
\end{equation}
In this case, the energy consumption increases with $\beta_1$, which is illustrated by Fig. \ref{Fig11} (c). Thus the minimum energy consumption is achieved when $\beta_1^*=0$. This is an OMA system, where the user only transmits its task to the ${\rm BS}_2$.

\end{itemize}
The detail derivation can be found in Appendix \ref{OptimalSolution}. Based on the optimal task assignment ratio, the optimal power allocation scheme can be achieved by closed-form expressions \eqref{OptimalPower}.

\subsection{Remarks and Discussions}
In this section, we present some analysis of the optimal solution for the energy minimization in NOMA transmission assisted MEC networks. Let us first define the energy consumption of each link (Link $m$ denotes the link from the user to ${\rm BS}_m$) with the task assignment ratio $\beta_1$ as 
\begin{equation}
\begin{aligned}
E_1\left(\beta_1\right)
=&T_{\textrm{max}}\left(2^{A\beta_1}-1\right)\left(\frac{1}{H_{2}}2^{A(1-\beta_1)}
+\frac{1}{H_{1}}-\frac{1}{H_{2}}\right)\\&+\kappa_{1}\beta_{1}LC_{1}\left(f_{1}\right)^{2}\\
E_2\left(\beta_1\right)=&T_{\textrm{max}}\frac{\left(2^{A\beta_1}-1\right)}{H_{2}}+\kappa_{2}\beta_1 LC_{2}\left(f_{2}\right)^{2}.
\end{aligned}
\end{equation}
Then we define energy consumption efficiency (${\rm ECE}_m$) of each link ($m=1,2$) by the derivatives, which includes energy consumption efficiency of offloading via the link to ${\rm BS}_m$ and computing at ${\rm BS}_m$. If  ${\rm ECE}_1>{\rm ECE}_2$, then Link 1 will consume more energy than Link 2 given by offloading task assignment ratio $\beta_1$
\begin{equation}\label{partialbeta1}
\begin{aligned}
{\rm ECE}_1=\frac{\partial E_1}{\partial\beta_1}=&\underset{{\rm ECE}_1^{o}}{\underbrace{\bar{A}\ln(2) 2^{A(1-\beta_1)}+\bar{A}\ln(2)\left(\frac{1}{H_{1}}-\frac{1}{H_{2}}\right)2^{A\beta_{1}}}}\\&+\underset{{\rm ECE}_1^{c}}{\underbrace{\kappa_{1}LC_{1}f_{1}^{2}}}\\
{\rm ECE}_2=\frac{\partial E_2}{\partial\beta_1}=&\underset{{\rm ECE}_2^{o}}{\underbrace{\bar{A}\ln(2) 2^{A(1-\beta_1)}}}+\underset{{\rm ECE}_2^{c}}{\underbrace{\kappa_{2}LC_{2}f_{2}^{2}}}\\
\end{aligned}
\end{equation}
where $\bar{A}=T_{\max}A$. ${\rm ECE}_m^o$ is the ECE of the offloading link to ${\rm BS}_m$. ${\rm ECE}_m^c$ is the ECE of the computing phase at ${\rm BS}_m$. To compare the ECE of these two links, we let
\begin{equation}\label{ECEDiff}
\begin{aligned}
{\rm ECE}_1-{\rm ECE}_2
=&\underset{{\rm ECE}{o}}{\underbrace{\bar{A}\ln(2)\left(\frac{1}{H_{1}}-\frac{1}{H_{2}}\right)2^{A\beta_{1}}}}-\\&\underset{{\rm ECE}{c}}{\underbrace{\left(\kappa_{2}LC_{2}f_{2}^{2}-\kappa_{1}LC_{1}f_{1}^{2}\right)}}
\end{aligned}
\end{equation}
where the first term, ${\rm ECE}o={\rm ECE}_1^o-{\rm ECE}_2^o$, is the ECE difference for the offloading phase. The second term, ${\rm ECE}c={\rm ECE}_1^c-{\rm ECE}_2^c$, is the ECE difference for computing phase. From the definition of ECE, we can obtain the following observations: ${\rm ECE}o$ is an increasing function of $\beta_1$.
\begin{rem}\label{rm1}
	The ECE of offloading phase, ECEo, is always positive based on $\frac{1}{H_1}-\frac{1}{H_2}\geq 0$. Thus, $ ECEo$ is a increasing function of $\beta_1$ due to $2^{A\beta_1}\geq 0$. 
\end{rem}
\begin{rem} \label{rm2}
	The ECE of computing phase, ECEc, is a constant. ECEc is positive when $BS_1$ has lower computing ECE than $BS_2$. Otherwise, ECEc is negative. 
\end{rem}
Based on Remarks \ref{rm1} and \ref{rm2}, the optimal solution can be concluded by two offloading schemes OMA system and pure NOMA system. The optimal solution can be interpreted by the ECE concept:

\begin{itemize} 
	\item [\emph{(1) ${\rm ECE}_1\leq {\rm ECE}_2, ({\rm ECE}_{\max}^o\leq {\rm ECE}c)$}]: this case corresponds to the scenario where the computing ECE difference ${\rm ECE}c$ is higher than the maximum of offloading ECE difference, ${\rm ECE}o$. This scenario is illustrated in Fig. \ref{Fig2}. (a). The upper bound of $\beta_1$ can be 1 or $\hat{\beta}_{1,\max}$, which correspond to the optimal solutions \emph{Case 1} and \emph{Case 2}, respectively. This scenario indicates that the ECE of the link to ${\rm BS}_1$ is far less than that of the link to ${\rm BS}_2$. In this case, the user prefers to offload its tasks to ${\rm BS}_1$ for remote executions to achieve the minimum energy consumption. 
	
	\item [\emph{(2) ${\rm ECE}_1= {\rm ECE}_2, ({\rm ECE}_{\min}^o<{\rm ECE}c<{\rm ECE}_{\max}^o)$}]: this case corresponds to the scenario in which the computing ECE difference, ${\rm ECE}c$, is higher than the minimum of offloading ECE difference, ${\rm ECE}_{\min}^o$ and lower than the maximum of offloading ECE, ${\rm ECE}_{\max}^o)$. This scenario is illustrated in Fig. \ref{Fig2}. (b). The upper bound of $\beta_1$ can be 1 or $\hat{\beta}_{1,\max}$, which correspond to the optimal solution \emph{Case 3} and \emph{Case 4}, respectively. This scenario is a pure NOMA offloading system, in which the user offloads its partial task $\beta_1L$ to ${\rm BS}_1$ and offload the remaining task $(1-\beta_{1}^*)L$ to ${\rm BS}_2$.
	
	\item [\emph{(3) ${\rm ECE}_1\geq {\rm ECE}_2, ({\rm ECE}_{\min}^o\geq {\rm ECE}c)$}]: this case corresponds to the scenario in which the computing ECE difference, ${\rm ECE}c$, is less than the minimum of offloading ECE difference, ${\rm ECE}_{\min}^o$. This is illustrated in Fig. \ref{Fig2}. (c). This case corresponds to the optimal solution \emph{Case 5}. This scenario is an OMA system, in which the user prefers to offload all its task to ${\rm BS}_2$ for remote executions to achieve the minimum energy consumption. This also indicates that the ECE of the link to ${\rm BS}_2$ is far less than that of the link to ${\rm BS}_1$. 
	\end{itemize}

\section{User association for multi-BS and multi-user via Matching}
In previous sections, the optimal energy-efficient resource allocation is derived in closed form for the two-BS case. To make our solution more practical, in this section, we consider a general scenario, where multiple users and multiple BSs are located in one single cell. To avoid the extremely high complexity of the exhaustive search method, we proposed a low-complexity algorithm for user association via matching theory.
\subsection{Design of User Association Algorithm}
 Assume that each user associated with two BSs occupies one subchannel. Thus the interference between users can be ignored due to different resource blocks. Let $\mathcal{B}=\left\{S_1, S_2, \cdots,S_{C_2^M}\right\}=\left\{\{{\rm BS}_1, {\rm BS}_2\}, 
\{{\rm BS}_1,{\rm BS}_3\},\cdots,\{{\rm BS}_{M-1},{\rm BS}_M\}\right\}$ denote the set of all the subsets of two distinct BSs\footnote{Generally, the number of users $N$ is larger than the number of BSs $M$. Therefore, the number of the possible subset including two BSs is $C_2^M$, where $C_2^M$ is the number of all the possible subsets of two distinct elements of $M$ BSs.} and $L=|\mathcal{B}|$. Let $\mathcal{U}=\{{\rm UE}_1,{\rm UE}_2,\cdots,{\rm UE}_N\}$ denote a set of users. We consider user association as a two-sided matching process between a set of $N$ users and a set $L$ of BS pairs. Therefore, the user association problem via matching ($L=N$) can be defined as:
\begin{defn}\label{Def1}
	A two-sided matching $\mathbb{M}$ is a mapping between the user set $\mathcal{U}$ and the BS pair set $\mathcal{B}$, satisfying the following conditions	
	\begin{itemize}
		\item [(1)] $\mathbb{M}({\rm UE}_n)\in\mathcal{B},\ \mathbb{M}(S_l)\in\mathcal{U},\forall n,l$;		
	    \item [(2)] $|\mathbb{M}({\rm UE}_n)|=1$, $|\mathbb{M}(S_l)|=1, \forall n,l$;
		\item [(3)] $S_l=\mathbb{M}({\rm UE}_n)\Leftrightarrow {\rm UE}_n=\mathbb{M}(S_l), \forall n,l$.
	\end{itemize}
\end{defn}	
In Definition \ref{Def1}, condition (1) indicates that each user in set $\mathbb{U}$ can be matched with a BS pair in set $\mathcal{B}$, and each BS pair in $\mathcal{B}$ is matched with a user in $\mathbb{U}$; Condition (2) states that each BS pair can be matched with only one user in $\mathbb{U}$ and vice versa; Property (3) implies that if ${\rm UE}_n$ is matched with $S_l$, then $S_l$ should be matched with ${\rm UE}_n$.

According to Definition \ref{Def1}, user association optimization is formulated as a two-sided matching problem. We aim to minimize the total energy consumption of the system. We first establish a preference list of users. For any ${\rm UE}_n \in \mathcal{U}$, ${\rm UE}_n$ prefers the BS pair $S_l$ rather $S_{l'}$ can be expressed as
\begin{equation}
	(S_l, \mathbb{M})\succ_{{\rm UE}_n} (S_{l'},\mathbb{M}')\Leftrightarrow EC_{{\rm UE}_n}(\mathbb{M})<EC_{{\rm UE}_n}(\mathbb{M}')
\end{equation}
where $EC_{{\rm UE}_n}$ is the energy consumption for ${\rm UE}_n$ associated with BSs in $S_l$. In terms of BS pairs, $S_l$ prefers to match with ${\rm UE}_n$ rather than ${\rm UE}_{n'}$ is described as 
\begin{equation}
	({\rm UE}_n,\mathbb{M})\succ_{S_l}({\rm UE}_{n'},\mathbb{M}')\Leftrightarrow EC_{S_l}(\mathbb{M})<EC_{S_{l'}}(\mathbb{M}')
\end{equation}
where $EC_{S_l}(\mathbb{M})$ is the energy consumption of the BS pair $S_l$ matched with user ${\rm UE}_n=\mathbb{M}(S_l)$.
\begin{algorithm}[!t] \small
	\caption{Matching Based User Association Algorithm }\label{Alg1}
	\begin{algorithmic}[1]		
		\STATE Initialize $\mathbb{M}$ by randomly matching each user to the BS groups.\\
		{\setlength\parindent{-1.5em} \bf Swap Matching Phase: }\vspace{1mm}
		\REPEAT
		\STATE Each ${{\rm UE}_i}$ searches other user ${{\rm UE}_j},\forall j\neq i$ to form the user pair $({\rm UE}_i,{\rm UE}_j)$.
		\IF {the user pair $({\rm UE}_i,{\rm UE}_j)$ is a swap blocking pair,}
		\STATE Swap the matching pair.
		\STATE Update $\mathbb{M}=\mathbb{M}_i^j$.
		\ENDIF
		\UNTIL There is no swap-blocking pair in $\mathbb{M}$.
	\end{algorithmic}
\end{algorithm}

To guarantee all the users are well matched with BSs, we develop a matching algorithm with low complexity to achieve a stable solution. We adopt swap matching, which is mathematically described as 
\begin{defn}\label{def2}
	A swap matching is denoted by \\$\mathbb{M}_n^{n'}=\{\mathbb{M}\backslash\{({\rm UE}_n,S_l),\ ({\rm UE}_{n'},S_{l'})\}\ \cup\  \{({\rm UE}_{n'}, S_l),({\rm UE}_{n},S_{l'})\}\}$,\\ where ${\rm UE}_n=\mathbb{M}(S_l),\ {\rm UE}_{n'}=\mathbb{M}(S_{l'}),\ {\rm UE}_{n}=\mathbb{M}_n^{n'}(S_l'),\  \text{and}\ {\rm UE}_{n'}=\mathbb{M}_n^{n'}(S_l)$.
\end{defn} 
where ${\rm UE}_n$ and ${\rm UE}_{n'}$ switch the matched BS pairs while keeping other matched pair in the matching scheme invariant. In a swap operation, considering their own interests, the player might not be approved by other users. Thus we introduce the concept of swap-blocking pair and then we evaluate the conditions under which the swap operations can be approved.
\begin{defn}\label{def3}
	Given matching $\mathbb{M}$ and two users $({\rm UE}_n,{\rm UE}_{n'})$ with ${\rm UE}_n=\mathbb{M}(S_l)$ and ${\rm UE}_{n'}=\mathbb{M}(S_{l'})$, if there exists a swap matching $\mathbb{M}_n^{n'}$ such that the energy consumption of these two users gets a decrease, then the swap operation is approved, and $({\rm UE}_n,\ {\rm UE}_{n'})$ is called a swap-block pair.
\end{defn}
Definition \ref{def3} implies that there is a benefit by exchanging the matching user pair $({\rm UE}_n,{\rm UE}_{n'})$ and this operation will not hurt the benefit of the other users' energy consumption. In the matching process, a potential swap blocking pair might be arranged by the scheduler, the scheduler will check if these two users can benefit from exchanging their matched BSs. The users will keep performing approved swap operations until they
reach a stable status, which is known as {\it two-sided exchange stable} matching, which is defined as
\begin{defn}\label{def4}
$\mathbb{M}$ is a two-sided exchange stable matching if $\mathbb{M}$ is not blocked by any swap blocking pai	
\end{defn}

Based on the above definitions, we proposed Algorithm \ref{Alg1} to solve the user association problem. In this algorithm, we first initialize the matching scheme $\mathbb{M}$. In the swap matching process, each use will iteratively check whether there are swap blocking pairs in the current matching scheme  $\mathbb{M}$. If so, we swap the user pair and update the current matching scheme. The matching process will terminate when there is no swap blocking pair in the current matching.
\subsection{Complexity Analysis}
For each ${\rm UE}_n$, there exist $N-1$ possible ${\rm UE}_{n'},n'\neq n$ to do swapping, thus the complexity order is given by $\mathcal{O}(\frac{N(N-1)}{2})$.
Therefore, the total complexity is $\mathcal{O}(K^2)$. Compared to the optimal strategy using the exhaustive search, which has a complexity order of $\mathcal{O}(K!)$, the computational complexity of the proposed
swap matching based algorithm is dramatically decreased.

\section{Simulations}

\begin{figure}[t]
\centering
\begin{minipage}[t]{0.48\textwidth}
\centering
\graphicspath{{./figures/}}
\includegraphics[width=0.95\linewidth]{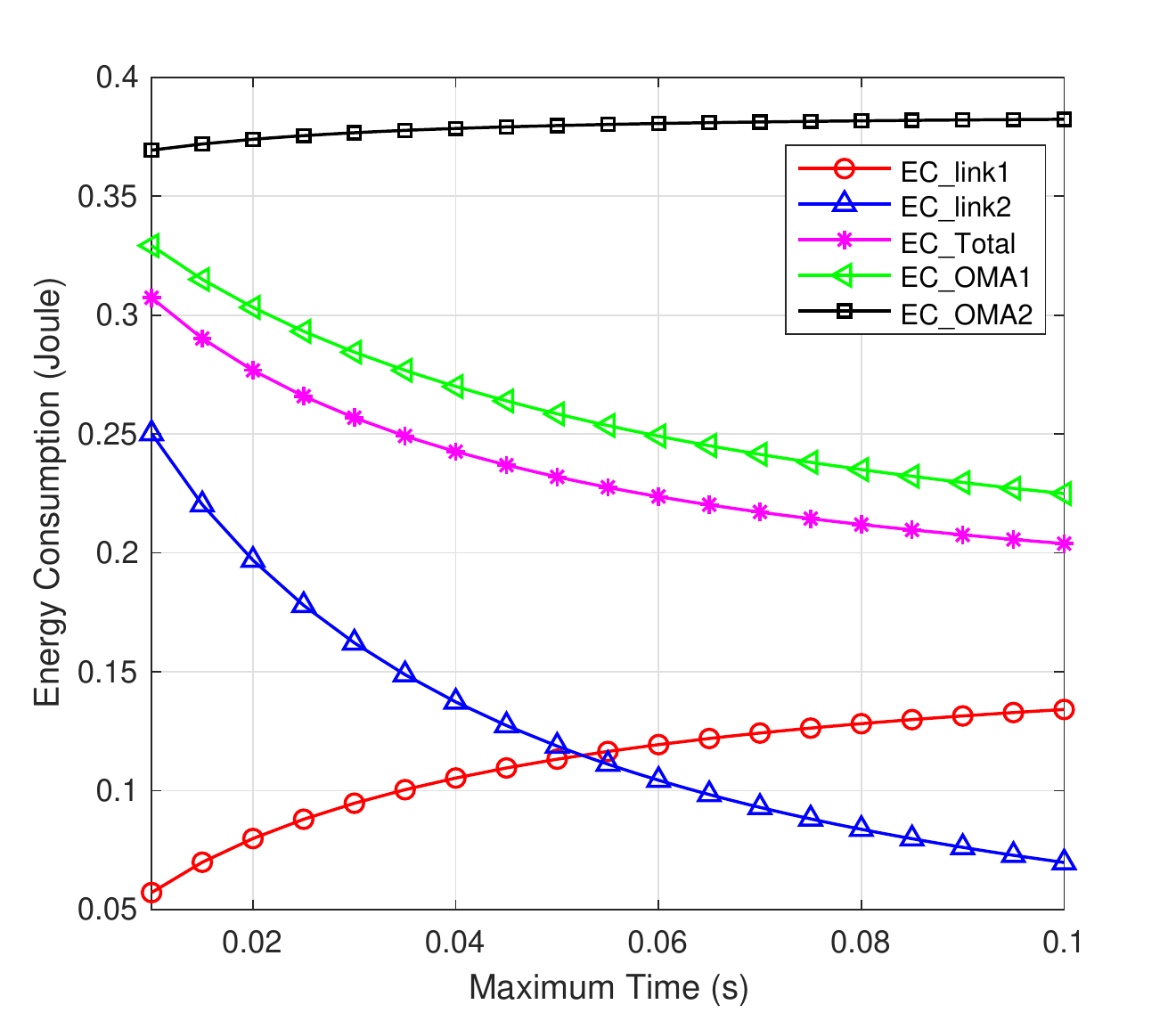}\\
	\caption{The energy consumption versus $T_{\max}$ with different offloading schemes. Radius: 500 m, $L=3.2\times10^6$ bits, $C=10^3\times[1,1]$, $\kappa=10^{-28}\times [0.8, 1.2]$ and $f=10^9\times[0.8,1]$.} \label{Fig3}
\end{minipage}
\begin{minipage}[t]{0.48\textwidth}
\centering
	\graphicspath{{./figures/}}\includegraphics[width=0.95\linewidth]{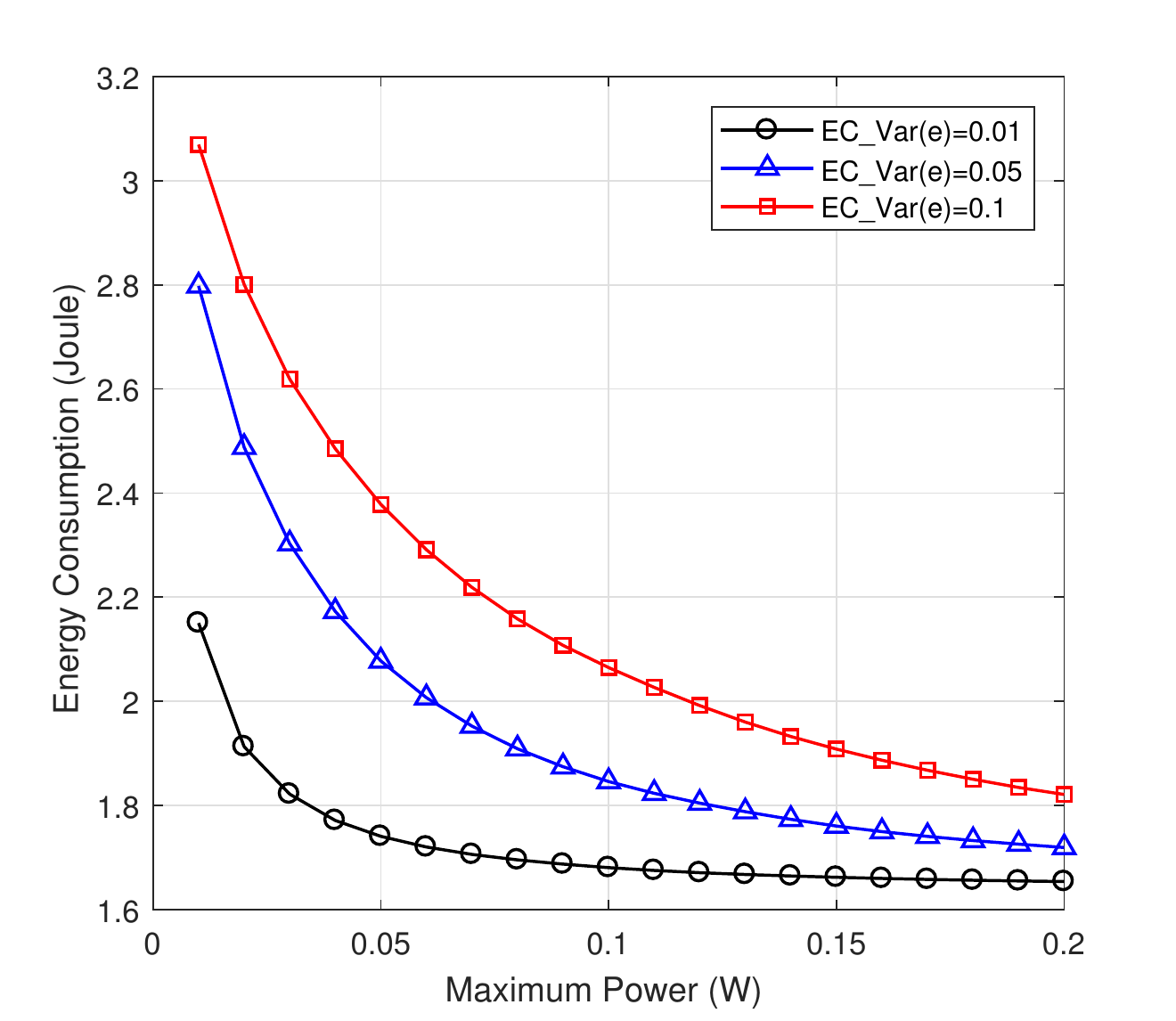}\\
	\caption{The energy consumption versus $P_{\max}$ with different variances of errors. Radius: 500 m, $L=3.2\times10^7$ bits, $C=10^3\times[1,1]$, $\kappa=10^{-28}\times [0.8, 1.2]$ and $f=10^9\times[0.8,1]$.} \label{Fig4}
\end{minipage}
\end{figure}

In this section, we evaluate the performance of our proposed scheme for the energy minimization in the downlink NOMA enabled MEC network. In the simulation setup, the user and two MEC servers are random distributed in a single disc with a radius of 500 m. We set the minimum distance between the user and BSs as 40 m to make our scenario more practical. The bandwidth is $B= 1$ GHz. According to the 3GPP urban path loss model, we set the path loss factor $\alpha=3.76$ \cite{3GPP}. The AWGN power is $\sigma_{z}^2=BN_0$ where $N_0=-174$ dBm/Hz is the AWGN spectral density. In order to guarantee the communication quality, the outage probability is set to $\varepsilon_o=0.1$.

In Fig. \ref{Fig3}, the energy consumption comparison of different offloading schemes is provided. We adopt two benchmarks for the performance comparison: 1. ${\rm BS}_1$ has the priority for offloading. 2. ${\rm BS}_2$ has the priority for offloading. We can observe that the system energy consumption decreases when the maximum time $T_{\max}$ increases. The offloading energy consumption of Link 1, including the offloading energy consumption on the Link 1 to ${\rm BS}_1$ and the computing energy consumption at ${\rm BS}_1$, decreases with $T_{\max}$. In the offloading scheme with the priority of ${\rm BS}_1$, the user will first offload its task to ${\rm BS}_1$. The link will be used when the task cannot be computed by ${\rm BS}_1$ within the $T_{\max}$. In this case, the fractional power allocation scheme \cite{FangJSAC17} is adopted to allocate different power to BSs. In other words, the link will be allocated more power when it has better channel gain than the other one. 
\begin{figure}[t]
\centering
\graphicspath{{./figures/}}
\begin{minipage}[t]{0.48\textwidth}
\centering
\includegraphics[width=0.95\linewidth]{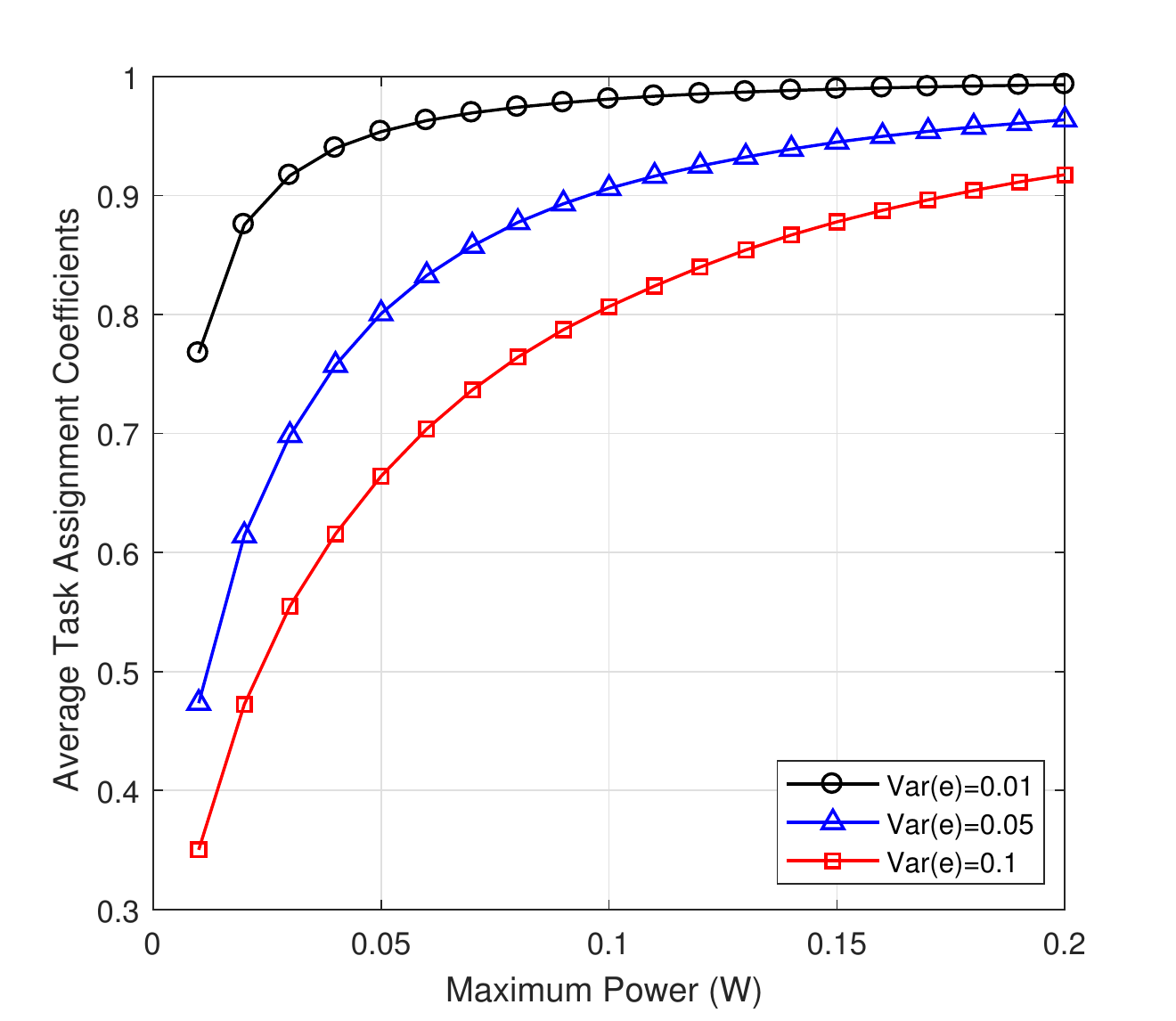}\\
	\caption{The average task assignment ratio to ${\rm BS}_1$ versus $P_{\max}$ with different error variances. Radius: 500 m, $L=3.2\times10^7$ bits, $C=10^3\times[1,1]$, $\kappa=10^{-28}\times [0.8, 1.2]$ and $f=10^9\times[0.8,1]$.} \label{Fig5}
\end{minipage}
\begin{minipage}[t]{0.48\textwidth}
\centering
\graphicspath{{./figures/}}	\includegraphics[width=0.95\linewidth]{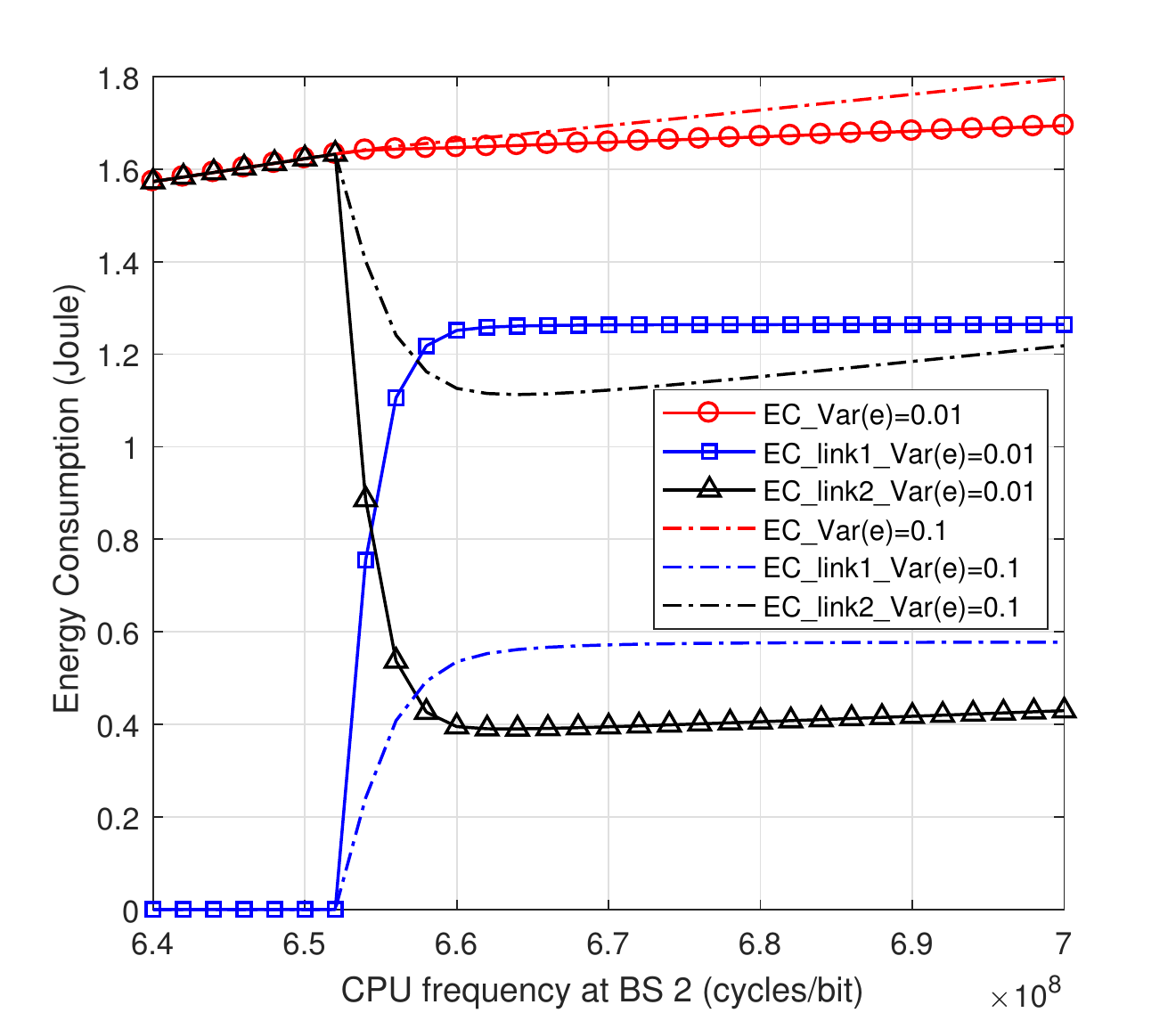}\\
	\caption{The energy consumption versus the CPU frequency $f_2$ at ${\rm BS}_2$ with different error variances. Radius: 500 m, $L=3.2\times10^7$ bits, $C=10^3\times[1,1]$, $\kappa=10^{-28}\times [0.8, 1.2]$ and $f_1=0.8\times10^9$.} \label{Fig6}
\end{minipage}
\end{figure}

Fig. \ref{Fig4} describes the energy consumption versus the maximum power with different variances of channel estimation error. The energy consumption decreases when the maximum power increases. As the maximum power grows larger, the energy consumption continues to decrease, but the decreasing rate becomes slower. This is because the feasible set of the optimization problem increases when the maximum power will increase. However, when it reaches a level, the optimal solution can be achieved without the effects of the maximum power. We call this maximum value the maximum power required to achieve the optimum. Moreover, Fig. \ref{Fig4} shows that the scheme with higher error variance will consume more energy than the scheme with lower error variance.  The task assignment ratio of Link 1 increases when the transmitted time increases.

Fig. \ref{Fig5} depicts the task assignment ratio to ${\rm BS}_1$ performance versus the maximum power. The task assignment ratio increases as the maximum power grows, but the increasing rate becomes slower. Moreover, Fig. \ref{Fig5} shows that the user prefers to assign more tasks to ${\rm BS}_1$ compared with the schemes with higher error variances. 
\begin{figure}[t]
\centering
\graphicspath{{./figures/}}
\begin{minipage}[t]{0.48\textwidth}
\centering
\includegraphics[width=0.95\linewidth]{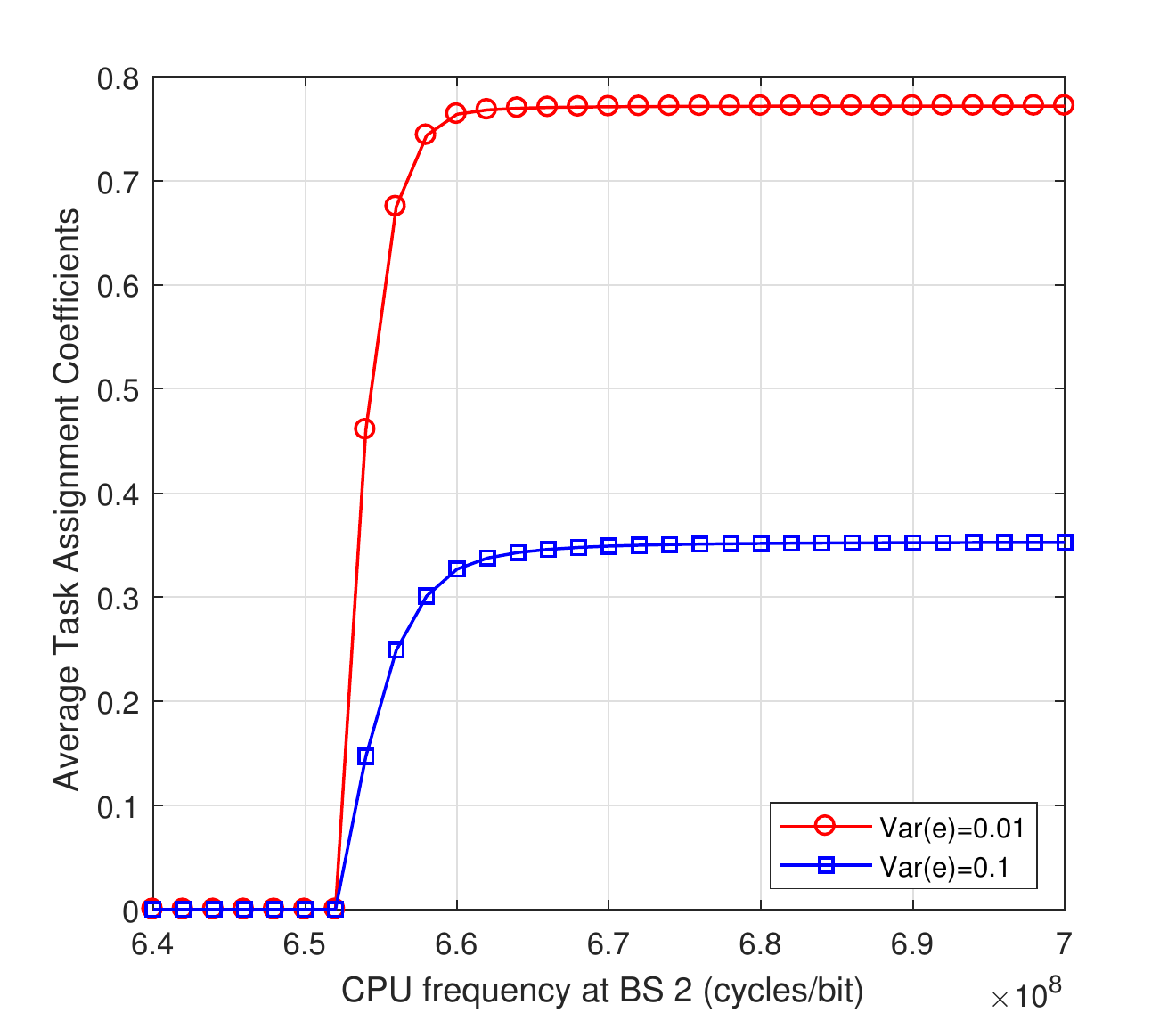}\\
	\caption{The average task assignment ratio to ${\rm BS}_1$ versus the CPU frequency $f_2$ at ${\rm BS}_2$ with different error variances. Radius: 500 m, $L=3.2\times10^7$ bits, $C=10^3\times[1,1]$, $\kappa=10^{-28}\times [0.8, 1.2]$ and $f_1=0.8\times10^9$.} \label{Fig7}
\end{minipage}
\begin{minipage}[t]{0.48\textwidth}
\centering
\graphicspath{{./figures/}}	\includegraphics[width=0.95\linewidth]{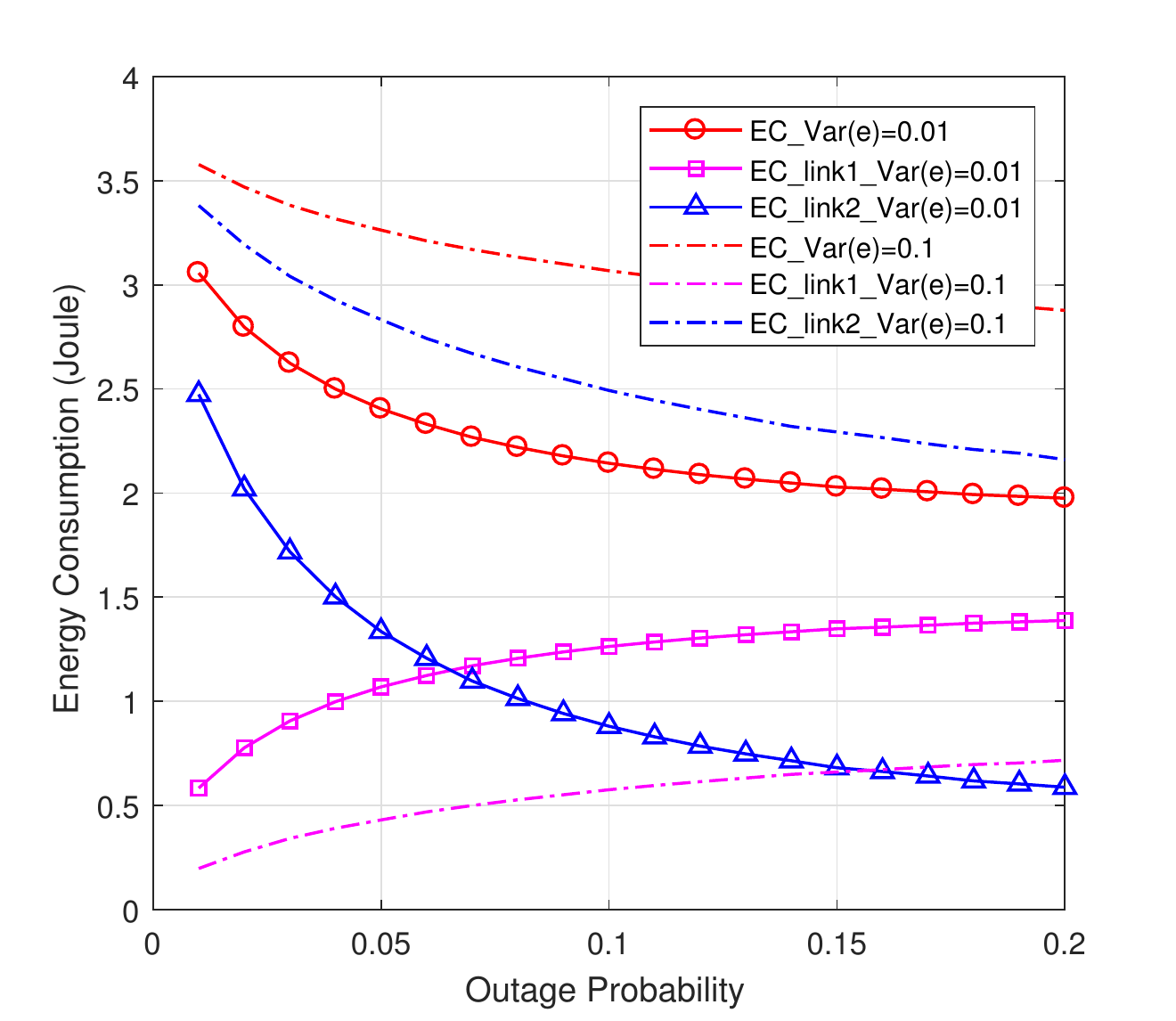}\\
	\caption{The energy consumption versus $T_{\max}$ with different offloading schemes. Radius: 500 m, $L=3.2\times10^7$ bits, $C=10^3\times[1,1]$, $\kappa=10^{-28}\times [0.8, 1.2]$, $f=10^9\times[0.8,1]$, $T_{\max}=0.01$ s and $P_{\max}=0.01$ W.} \label{Fig8}
\end{minipage}
\end{figure}
Fig. \ref{Fig6} shows the energy consumption versus the CPU frequency of ${\rm BS}_2$ by considering the schemes with different variances of channel estimation error. From Fig. \ref{Fig6}, the total energy consumption increases as the CPU frequency grows. This is because that the energy consumption increases as the $f_2$ grows. Moreover, the scheme with lower channel estimation error variance has lower energy consumption than the scheme with higher channel estimation error variance. Thus, as expected, the channel estimation error can degrade the energy efficiency performance.

Fig. \ref{Fig7} shows the average task assignment ratio to ${\rm BS}_1$ versus CPU frequency of ${\rm BS}_2$. We can observe that the offloading task assignment ratio is zero before the value of $f_2$, which corresponds to ${\rm ECE}c=0$. Before this point, the best solution is the OMA system. The user only offloads its task to ${\rm BS}_1$. After this point, the user starts to offload the partial task to ${\rm BS}_2$ as well. The task assignment ratio to ${\rm BS}_1$ increases as the CPU frequency of ${\rm BS}_2$ grows and stays stable after it reaches its optimal value. This corresponds to the pure NOMA offloading scheme. It can also be observed that the scheme with lower error variance has a higher task assignment ratio than the scheme with a higher estimated error variance.

Fig. \ref{Fig8} evaluates energy consumption versus the outage probability. It can be seen that the energy consumption decreases when the outage probability increases. When the outage probability is low, a high target rate is required to guarantee the simultaneous communication has a high probability of being successful. In this case, a higher target rate must cost more energy than a lower target rate. Similarly, a high outage probability requires a lower target rate, which costs lower energy than hither target rate.

\begin{figure}[t]
\centering
\begin{minipage}[t]{0.48\textwidth}
\centering
	\graphicspath{{./figures/}}\includegraphics[width=0.95\linewidth]{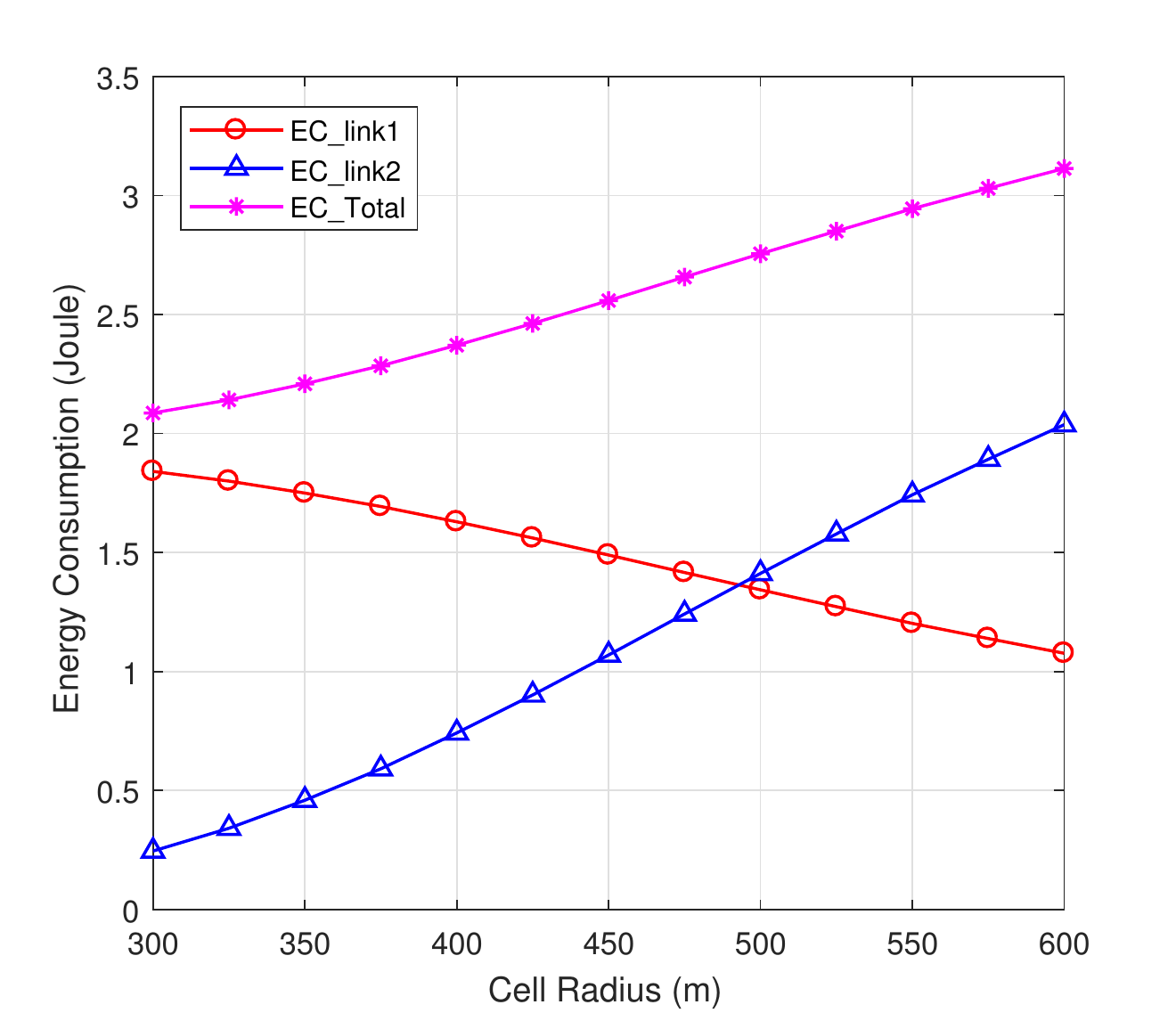}\\
	\caption{The energy consumption versus outage probability $\varepsilon_o$ with different estimated error variances.  $L=3.2\times10^7$ bits, $C=10^3\times[1,1]$, $\kappa=10^{-28}\times [0.8, 1.2]$, $f=10^9\times[0.8,1]$, $T_{\max}=0.01$ s, $P_{\max}=0.01$ W and $\varepsilon_o=0.1$.} \label{Fig9}
\end{minipage}
\begin{minipage}[t]{0.48\textwidth}
\centering
\graphicspath{{./figures/}}	\includegraphics[width=0.95\linewidth]{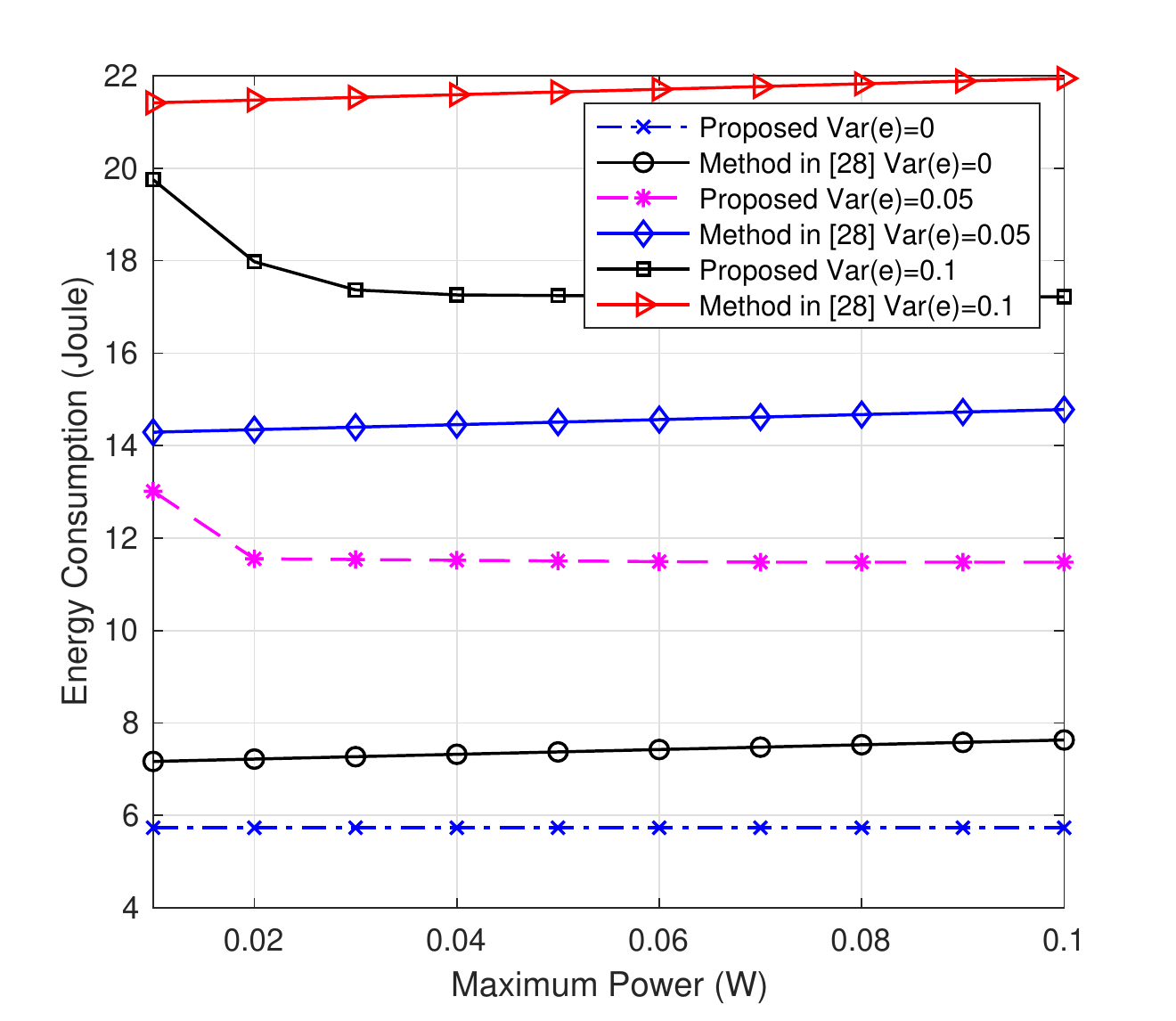}\\
	\caption{Energy consumption comparison with the method in \cite{TXTranTVT2018} (three users and three BSs). Radius: 500 m, $L=1\times10^8$ bits, $C=10^3\times[1,1,1]$, $\kappa=10^{-28}\times [0.8, 1.2,0.8]$, $f=10^9\times[0.8,1,0.8]$ and $T_{\max}=0.1 s$.} \label{Fig12}
\end{minipage}
\end{figure}
Fig. \ref{Fig9} depicts the energy consumption versus cell Radius. It can be observed that the energy consumption increases as the cell radius increases. This is because larger distances from the user to BSs will lead to higher energy consumption to meet QoS requirement due to poor channels. From this figure, it can also be seen that the energy consumption of Link 1 decreases while that of Link 2 increases.



Fig. \ref{Fig12} shows the performance comparison of the proposed scheme in this work with the power allocation scheme proposed in \cite{TXTranTVT2018}. From this figure, we can see that the proposed scheme in this work can provide lower energy consumption than the proposed power allocation scheme in \cite{TXTranTVT2018}. This is because that the proposed scheme in this paper is an optimal solution, while the bi-section search power allocation scheme in \cite{TXTranTVT2018} is a suboptimal solution by using an approximation to transfer the non-convex into a convex problem. It also shows that the proposed scheme in this work is more energy efficient than that in \cite{TXTranTVT2018} since the NOMA transmission is applied in this work.
\section{Conclusion}

In this paper, considering the imperfect CSI, we proposed an energy-efficient resource allocation scheme for a multi-user, multi-cell NOMA-MEC network. Specifically, based on the estimated channel model, we transformed the probabilistic problem into a nonprobabilistic one by incorporating the outage constraint into the objective function. Subsequently, we derived the closed-form expressions of the task and power allocation were derived to achieve the minimum energy consumption for the two-BS case. The analysis of the optimal solution provides the conditions and optimal solutions of two transmission models: 1. OMA offloading transmission to BSs; 2. Pure NOMA offloading transmission to both BSs. In addition, a low complexity user association algorithm was proposed for the multi-user and multi-BS case. Simulation results showed that the proposed solution can achieve better performance than the conventional OMA system, and the proposed user association algorithm can achieve lower complexity than the exhaustive search method. Since the closed-form solution can largely reduce the computation complexity, and the proposed scheme can be efficiently implemented in practice.

\appendix
\subsection{Proof of Proposition \ref{Out_RR}}\label{ProofProp1}
 	By using \eqref{C_m} and \eqref{equalOut}, we have 
\begin{equation}\label{CDF}
\begin{aligned}
&\Pr\left[B\log_{2}\left(1+\frac{G_{m}p_{m}}{G_{m}\sum\limits _{i=m+1}^{M}p_{i}+1}\right)<R_{m}|\hat{G}_{m}\right]\\
=&\Pr\left[G_{m}p_{m}<\gamma_{s}\left(G_{m}\sum\limits _{i=m+1}^{M}p_{i}+1\right)|\hat{G}_{m}\right]\\
=&\Pr\left[G_{m}<\frac{\gamma_{s}}{p_{m}-\gamma_{s}\sum\limits _{i=m+1}^{M}p_{i}}|\hat{G}_{m}\right]\\
=&F_{{G}_m}\left(\frac{\gamma_{s}}{p_{m}-\gamma_{s}\sum\limits _{i=m+1}^{M}p_{i}}|\hat{G}_{m}\right)
\end{aligned}
\end{equation}
where $F_{{G}_m}(x)$ denotes the cumulative distribution function (CDF) of $G_m$ and $\gamma_{s}=2^{\frac{R_{m}}{B}}-1$. Since $G_m$ is a non-central Chi-square distributed random variable with degrees of freedom 2. The non-centrality parameter is $\lambda_m^2=\frac{\hat{G}_m\sigma_{z}^2d_m^{\alpha}}{\sigma_{\epsilon}^2/2}$. Thus we have
\begin{equation}
F_{{G}_m}(x)
=1-Q_1\left(\lambda_m,\sqrt{\frac{\sigma_{z}^2d_m^{\alpha}x}{\sigma_{\epsilon}^2/2}}\right)=\varepsilon_o
\end{equation}
where $Q_1(a,b)=\exp\left(-\frac{a^2+b^2}{2}\right)\sum\limits_{k=0}^{\infty}(\frac{a}{b})^kI_k(ab)$ is the Marcum-Q-function $Q_1$ with modified Bessel function $I_k$ of order $k$ \cite{SMImperfect2015}. 
The inverse CDF of $G_m$, $F_{{G}_m}^{-1}(\varepsilon_o)$, can be used to obtain $\hat{R}_m$. The inverse CDF can be evaluated by using a lookup table \cite{NgTVT2010}. To obtain a closed-form result, we approximate non-central chi-square distribution, $\chi_{\hat{G}_{m}}^{2}\left(\lambda_{m}\right)$, by central chi-square distribution expression, $\chi_{\hat{G}_{m}}^{2}\left(0\right)$,  \cite{SMImperfect2015}:

\begin{equation}
\Pr \left[\chi_{\hat{G}_{m}}^{2}\left(\lambda_{m}\right)<x \right] \thickapprox \Pr \left[\chi_{\hat{G}_{m}}^{2}\left(0\right)<\frac{x}{1+\lambda_{m}^2/2} \right]
\end{equation}
This approximation can be proved accurate when the ratio between the non-central parameter $\lambda_m^2$ and degrees of freedom 2 is less than 0.2, which is $\lambda_m^2/2 \leq 0.2$. It has been shown in \cite{PJCIIEEETSP2011} that this approximation works well even the estimated error is large. Since central chi-square distribution with freedom 2 is exponential distribution, this approximation can transform the probabilistic constraint into a deterministic closed-form
\begin{equation}\label{exp}
\Pr \left[\chi_{\hat{G}_{m}}^{2}\left(0\right)<\frac{x}{1+\lambda_{m}^2/2} \right]=1-\exp\left({-\frac{x}{2\left(1+\lambda_{m}^2/2\right)}}\right).
\end{equation}
According to \eqref{equalOut} and \eqref{exp}, we have  
\begin{equation}
1-\exp\left({-\frac{\frac{\sigma_{z}^2d_m^{\alpha}\gamma_{s}}{p_{m}-\gamma_{s}\sum\limits _{i=m+1}^{M}p_{i}}}{2(1+\lambda_{m}^2/2)}}\right)=\varepsilon_o.
\end{equation}
Finally, we obtain the target data rate
\begin{equation}
\gamma_{s}=\frac{H_mp_m}{1+H_m\sum\limits _{i=m+1}^{M}p_{i}}
\end{equation}
where $H_m=-\ln(1-\varepsilon_o)2(1+\lambda_{m}^2/2)/(\sigma_{z}^2d_m^{\alpha})$.

\subsection{The Optimal Power Derivation}\label{PowerDerivation}
The energy consumption function of $p_1$ and $p_2$ can be written by
\begin{equation} \label{eq:EC}
\begin{aligned}
	E(p_1,p_2 )=&\frac{\beta_1 Lp_1}{(1-\varepsilon_o)B\log_2(1+\frac{H_1p_1}{H_1p_2+1})}\\&+\frac{(1-\beta_1) Lp_2}{(1-\varepsilon_o)B\log_2(1+H_2p_2)}\\&+\kappa_{1}\beta_1LC_{1}f_{1}^{2}+\kappa_{2}(1-\beta_1)LC_{2}f_{2}^{2}.
\end{aligned}
\end{equation}
It is difficult to find the optimal solution of problem $g(\beta_1)$ due to the nonconvexity of the objective function. However, the problem can be equivalently transformed into a bilevel programming problem, with the upper-level variable $p_2$, which is given by
\begin{subequations}\label{Prob:P2}
	\begin{align}
	\underset{0\leq p_{1}\leq P_{\max}}{\min}h\left(p_{1}\right)&\triangleq\underset{p_{2}}{\min}\ E(p_{1},p_{2})\\
	\text{s.t.} \quad &0\leq p_{2}\leq P_{\max}-p_{1}, \\
	&\frac{(1-\beta_1)L}{(1-\varepsilon_o)B\log_2(1+H_2p_2)}\leq T_{\max}, \label{T1} \\
	&\frac{\beta_1L}{(1-\varepsilon_o)B\log_2(1+\frac{H_1p_1}{H_1p_2+1})}\leq T_{\max} \label{T2}\\
	\nonumber	
	\end{align}
\end{subequations}
where $h(p_1)$ is the inner optimization problem with respect to $p_1$. Constraints \eqref{T1} and \eqref{T2} can be rewritten by 
\begin{equation}
	\frac{\left(2^{A(1-\beta_1)}-1\right)}{H_{2}}
	\leq p_2 \leq \frac{1}{H_1}\left(\frac{H_1p_1}{2^{A\beta_1}-1}-1\right). \\
\end{equation}
Take the partial derivative of \eqref{eq:EC} with respect to $p_2$, we have
\begin{equation}\label{eq:p_2partial}
\begin{aligned}
	\frac{\partial E }{\partial p_2}=&\frac{\beta_1 Lp_1}{(1-\varepsilon_o)B}\frac{\frac{H_1^2p_1}{\left(1+\frac{H_1p_1}{H_1p_2+1}\right)\left(H_1P_2+1\right)^2\ln(2)}}{\left(\log_2\left(1+\frac{H_1p_1}{H_1p_2+1}\right)\right)^2}
	\\&+\frac{\beta_1 L}{(1-\varepsilon_o)B}\frac{\frac{(1+H_2p_2)\ln(1+H_2p_2)-H_2p_2}{(1+H_2p_2)\ln(2)}}{\left(\log_2(1+H_2p_2)\right)^2}.
\end{aligned}
\end{equation}
It can be observed that the first term and the second term in \eqref{eq:p_2partial} are positive. By using the inequality $x\ln x\geq x-1,\  \forall x>0$, we have $(1+H_2p_2)\ln(1+H_2p_2)-H_2p_2\geq 0$. Therefore, the partial derivative of \eqref{eq:p_2partial}, $\frac{\partial E}{\partial p_2}>0$. Therefore, it can be concluded that the energy consumption is monotonically increasing with $p_2$, and the optimal $p_2$ can be obtained at its the minimum, which is $p_2^*=\frac{2^{A(1-\beta_1)}-1}{H_{2}}$. Then $h(p_1)=E(p_1,p_2^*)$. Since $p_2^*$ is not a function of $p_1$, we can treat $p_2^*$ as a constant. The outer optimization problem can be written by
\begin{subequations}\label{Prob:P1}
	\begin{align}
	\underset{0\leq p_{1}\leq P_{\max}}{\min} &h\left(p_{1}\right)\\
	\text{s.t.} \quad &0\leq p_{1}\leq P_{\max} \\
	&\frac{\beta_1L}{(1-\varepsilon_o)B\log_2(1+\frac{H_1p_1}{H_1p_2^*+1})}\leq T_{\max} \label{T22}\\
	&\frac{(1-\beta_1)L}{(1-\varepsilon_o)B\log_2(1+H_2p_2^*)}\\&=\frac{\beta_1L}{(1-\varepsilon_o)B\log_2(1+\frac{H_1p_1}{H_1p_2^*+1})}.\\
	\nonumber	
	\end{align}
\end{subequations}
Take the partial derivative of \eqref{eq:EC} with respect to $p_1$, we have
\begin{equation}\label{eq:p_1$ 2 $partial}
\begin{aligned}
\frac{\partial h }{\partial p_1}=\frac{\beta_1 L}{(1-\varepsilon_o)B} \frac{\frac{\left(1+\frac{H_1p_1}{H_1p_2^*+1}\right)\ln\left(1+\frac{H_1p_1}{H_1p_2^*+1}\right)-\frac{H_1p_1}{H_1p_2^*+1}}{\left(1+\frac{H_1p_1}{H_1p_2^*+1}\right)\ln(2)}}{\left(\log_2\left(1+\frac{H_1p_1}{H_1p_2^*+1}\right)\right)^2}.
\end{aligned}
\end{equation}	
By using the inequality $x\ln x\geq x-1,\  \forall x>0$, we have 
\begin{equation}
	\left(1+\frac{H_1p_1}{H_1p_2+1}\right)\ln\left(1+\frac{H_1p_1}{H_1p_2+1}\right)\geq\frac{H_1p_1}{H_1p_2+1}.
\end{equation} 
It can be verified that $\frac{\partial E }{\partial p_1}\geq 0$. Therefore, it can be concluded that the energy consumption is monotonically increasing with $p_1$. We have the optimal solution of $p_1^*$, which is
\begin{equation} \label{p11optimal}
p_1^*=\left(2^{A\beta_1}-1\right)\left(\frac{1}{H_2}\left(2^{A\left(1-\beta_1\right)}-1\right)+\frac{1}{H_1}\right)
\end{equation}
Above all, the energy consumption is monotonically increasing with $p_1$ and $p_2$. Then the minimum energy consumption can be achieved when transmit power equals the minimum required power by the constraints in \eqref{Prob:Bilevel}.


\subsection{Optimal Solution Derivation}\label{OptimalSolution}
According to Lemma \ref{feasible}, the feasible range requirement of problem \eqref{Prob:E_min} is $2^A\leq 1+H_2P_{\max}$.

\emph{Case 1}: $g(\beta_1)$ decreases within its feasible region with its upper bound 1, which requires $\hat{\beta}_{1,\max}\geq 1$ and $\hat{\beta}_1\geq 1$. Then we have the conditions \eqref{Case1Con} for $\beta_1^*=1$.

\emph{Case 2}: $g(\beta_1)$ decrease within its feasible region with its upper bound $\hat{\beta}_{1,\max}$, which requires $0<\hat{\beta}_{1,\max}<1$ and $\hat{\beta}_1>\hat{\beta}_{1,\max}$. Then we have conditions \eqref{Case2Con} for $\beta_1^*=\hat{\beta}_{1,\max}$.

\emph{Case 3}: $g(\beta_1)$ first decreases until $\hat{\beta}_1$ and then increases. Its feasible region is upper bounded by 1, which requires $\hat{\beta}_{1,\max}\geq 1$ and $0<\hat{\beta}_1<1$. Then we have conditions \eqref{Case3Con} for $\beta_1^*=\hat{\beta}_1$.

\emph{Case 4}: $g(\beta_1)$ first decreases until $\hat{\beta}_1$ and then increases. Its feasible region is upper bounded by $\hat{\beta}_{1,\max} $, which requires $\hat{\beta}_{1,\max}< 1$ and $0<\hat{\beta}_1<\hat{\beta}_{1,\max}$. Then we have conditions \eqref{Case4Con} for $\beta_1^*=\hat{\beta}_1$.

\emph{Case 5}: $g(\beta_1)$ increases within its feasible region with its lower bound 0, which requires $\hat{\beta}_{1}\geq 0$. Combined with the feasible condition \eqref{feasibleCon}, then we have the conditions \eqref{Case1Con} for $\beta_1^*=1$.

\bibliographystyle{IEEEtran}

\end{document}